\documentclass[10pt,journal,compsoc]{IEEEtran}

\usepackage{graphicx}
\usepackage{subfigure}
\usepackage{algpseudocode}
\usepackage{amsmath}
\usepackage{amsfonts,amssymb}
\usepackage{amsthm}
\usepackage{multirow}
\usepackage[ruled]{algorithm2e}

\newtheorem{theorem}{Theorem}
\newcommand{\tabincell}[2]{\begin{tabular}{@{}#1@{}}#2\end{tabular}}


\begin{document}

\title{PPtaxi: Non-stop Package Delivery via Multi-hop Ridesharing}

\author{Yueyue Chen, Deke Guo,~\IEEEmembership{Senior Member,~IEEE,}
        Ming Xu,~\IEEEmembership{Member,~IEEE,}
        Guoming Tang, Tongqing Zhou, and Bangbang Ren 
\IEEEcompsocitemizethanks{\IEEEcompsocthanksitem Y Chen, M Xu, and T Zhou are with the college of computer, National University of Defense Technology, Changsha, China. D Guo is with the College of System Engineering, National University of Defense Technology, Changsha, China; he is also with the School of Computer Science and Technology, Tianjin University, Tianjin, China. G Tang, and B Ren are with the College of System Engineering, National University of Defense Technology, Changsha, China. \protect\\Email: \{yueyuechen, dekeguo, xuming, gmtang, zhoutongqing, bangbangren\}@nudt.edu.cn}}

%


\IEEEtitleabstractindextext{%
\begin{abstract}
City-wide package delivery becomes popular due to the dramatic rise of online shopping. It places a tremendous burden on the traditional logistics industry, which relies on dedicated couriers and is labor-intensive. Leveraging the ridesharing systems is a promising alternative, yet existing solutions are limited to one-hop ridesharing or need consignment warehouses as relays. In this paper, we propose a new package delivery scheme which takes advantage of multi-hop ridesharing and is entirely consignment free. Specifically, a package is assigned to a taxi which is guided to deliver the package all along to its destination while transporting successive passengers. We tackle it with a two-phase solution, named \textbf{PPtaxi}. In the first phase, we use the Multivariate Gauss distribution and Bayesian inference to predict the passenger orders. In the second phase, both the computation efficiency and solution effectiveness are considered to plan package delivery routes. We evaluate \textbf{PPtaxi} with a real-world dataset from an online taxi-taking platform and compare it with multiple benchmarks. The results show that the successful delivery rate of packages with our solution can reach $95\%$ on average during the daytime, and is at most $46.9\%$ higher than those of the benchmarks.
\end{abstract}

\begin{IEEEkeywords}
Package Delivery, Ridesharing, Crowdsourcing, Route Planning, Multi-hop.
\end{IEEEkeywords}}

\maketitle

\IEEEdisplaynontitleabstractindextext

%
\IEEEpeerreviewmaketitle

\section{Introduction}

Along with the pervasive of mobile payment and the rapid development of the logistics industry, online shopping has become popular nowadays and has brought great convenience to daily lives~\cite{lowe2014last}. To further improve the shopping experience of customers, many online retailers (like Amazon~\cite{amazon} and JD~\cite{jd}) begin to offer the city-wide \emph{same-day delivery solution} (i.e., the package is promised to be delivered to the customer within 24 hours). Traditional approaches rely on dedicated couriers for package delivery. Wherein, speeding up the delivery process usually means to put in additional personnel and vehicles, which adds up to a higher delivery cost. Therefore, how to reduce the delivery cost while controlling the delivery time is significantly important for the online retailers, the logistics providers, and the customers.


A promising alternative to the traditional package delivery methods is by leveraging the ridesharing systems, as the potential delivery capability through ridesharing within an urban environment is enormous \cite{wangridesharing}. By allowing the passengers and the packages with similar itineraries and time schedules to share one vehicle, the delivery cost can be reduced significantly \cite{furuhata2013ridesharing,rouges2014crowdsourcing}. 

Nevertheless, existing ridesharing solutions for package delivery are with two major defects, stopping them from being widely adopted in practice. First, most ridesharing solutions attempt to deliver a package in one shot, i.e., one-hop ridesharing \cite{Arslan2016,wangridesharing,Liu2018}. For example, Walmart proposes to make use of its in-store customers to deliver goods to its online customers on their way home from the store \cite{Arslan2016}. Wang et al. \cite{wangridesharing} encourage a private car to change its regular route to a similar one which passes through the package pick up and drop off locations. Such one-hop ridesharing systems provide little chance to be utilized in the city-wide package delivery, as the similar itineraries with likely origin and destination are quite limited. Second, for those packages cannot be delivered within one hop ridesharing, consignment warehouses are needed as relays. For instance, Chen et al. \cite{Chen2017} propose to place consignment warehouses along roads as relays for taxis to support the ridesharing system. Thus, the delivery process of one package needs the cooperation of several taxis, and the taxi drivers have to get on and off frequently to pick up and drop down the packages. With such a method, the storage cost is increased dramatically as a large number of warehouses need to be built to cover a whole city.

In this paper, we propose a new package delivery scheme, which uses ridesharing of crowdsourced taxis while eliminating the aforementioned two defects. With our scheme, as shown in Fig. \ref{fig_sce}, a package is assigned to a selected taxi, which is requested to deliver the package all along. Meanwhile, the taxi is also able to transport one or more passengers successively until the package reaches its destination. During the whole process of package delivery, i) the benefit of multi-hop ridesharing is well explored and ii) no intermediate consignment warehouses are needed (i,e., non-stop package delivery). 

\begin{figure}
\centering
\includegraphics[width=3in]{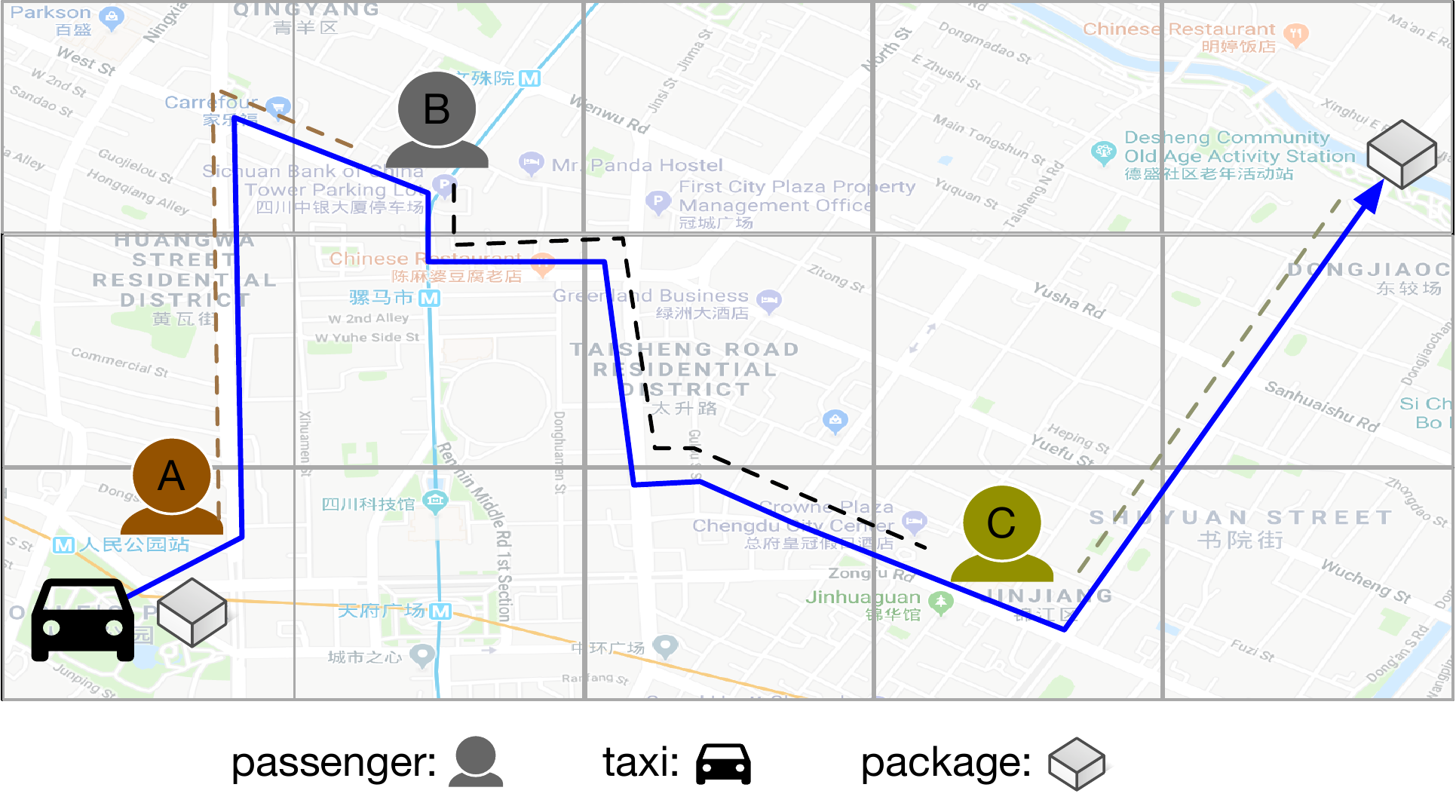}
\caption{An illustration of our package delivery scheme. The dotted lines denote the paths of passengers, and the solid line denotes the path of a package. By strategically taking the passenger orders (A, B, and C in the figure), the taxi successfully delivers the package to its destination.} 
\label{fig_sce}
\end{figure}


During the non-stop package delivery via multi-hop ridesharing, a package should follow a route forward to its destination while consecutive passenger orders are popping up along the route. One main difficulty of planning such a route is that future passenger orders are unknown. Without considering the demands of passenger orders, a taxi tends to choose the nearest passenger to pick up \cite{tong2016}. Following such a prediction-oblivious route, there may be little chance to deliver the package successfully.

Consequently, there are two main challenges to implement our non-stop package delivery via multi-hop passengers ridesharing. \textbf{Q1: How to predict the city-wide demands of passenger orders?} The passenger travel demands usually show weekly and daily pattern \cite{lin2018optimal,Zhang2017taxi}. Most demands prediction work focuses on estimating the order number for a given location and time \cite{tong2017simpler,yao2018deep}, while few of them study the passenger flows \cite{Zhang2017taxi}. In our context, however, the city-wide passenger flows in the future are required to find the (potential) best package delivery routes. \textbf{Q2: How to plan package delivery routes with dynamic passenger orders?} Two kinds of requirements should be met when planning delivery routes: i) the route selection should be completed in real time since taxis move fast and passenger order distribution varies over time; ii) the ratio of successfully delivered packages should be high enough by utilizing the selected delivery routes. Otherwise, the package delivery scheme is problematic in practice.


By addressing the above challenges, we make the following contributions in this paper. 

\begin{itemize}
\item We propose and formulate the non-stop package delivery problem (NPD) of route probability maximization, subject to package delivery delay constraint (Section \ref{sec_prob}). We prove that the problem is NP-Complete.
\item We design a two-phase solution to the NPD problem, named \textbf{PPtaxi} (Section \ref{sec_solution}). To solve the passenger prediction problem (Q1), we use the Multivariate Gauss distribution and the Bayesian inference to calculate the probability of each passenger flow. To solve the route planning problem (Q2) and meet the two requirements, we propose a three-stage strategy, in which a Dijkstra-based optimal route planning algorithm, a prediction based sequential route planning algorithm and a heuristic sequential planning algorithm are developed, respectively.
\item We evaluate \textbf{PPtaxi} with a real-world dataset from an online taxi-taking platform and compare its performance with multiple benchmarks (Section \ref{sec_experiment}). The results show the effectiveness and efficiency of our solution.
\end{itemize}

The advantages of applying our solution are three-fold: i) for the taxi driver: the taxi's free space is further utilized; ii) for the online retailer and logistics provider: no consignment warehouse (i.e., storage cost) is needed anymore between the package origin and destination; iii) for the online shopping customer: the corresponding cost of the package delivery is decreased. 

In addition, Section \ref{sec_relat} reviews previous work related to this paper. We conclude the paper in Section \ref{sec_conclusion}.

\section{Problem Statement}
\label{sec_prob}

In this section, we first introduce the scenario of our non-stop package delivery scheme, then formulate the system model and objective function, at last, we analyze the problem hardness.

\subsection{Non-stop Package Delivery Scenario}

We illustrate a scenario example of our non-stop package delivery scheme in Fig. \ref{fig_sce}. Suppose a taxi has already picked up a package in origin, and needs to transport the package to the destination. The taxi can transport A, B, C three passengers one after another while carrying the package. The drop off location and time of passenger A (B) are close to the pickup location and time of passenger B (C), the destination of passenger C is close to the package destination. After dropping off the passenger C, the taxi will deliver the package to its destination directly. In our solution, there is a cloud platform to schedule taxis and dispatch passenger orders. Note that, the matching problem between taxis and packages, which has been researched widely in recent papers \cite{tong2016,tong2017flexible}, is not discussed in this paper. As the route of each taxi is determined by the dispatched passenger orders, which taxi to select to fetch the package is not the key impactor in our problem. In this paper, a package is defaulted to be fetched by the nearest taxi. 

To achieve our solution in the real world, we make some practical assumptions in this paper. 

\emph{Assumption 1: The information of passengers is not known to the platform until they make taxi-taking requests on the platform; the taxi drivers are requested to accept the passenger orders assigned by the platform.}

In fact, this assumption is already achieved on the online taxi-booking platform, such as DiDi\cite{didi} and Uber\cite{uber}. Once a passenger launches a taxi-taking request on the platform, he/she must upload the related information to match proper taxis. The passenger orders are assigned to taxis by the platform, and the taxi drivers who are willing to join the platform are requested to accept the assignment unless with special cases.

\emph{Assumption 2: The taxi drivers are requested to deliver packages if they are selected. There is always spare space for the delivered package in the selected taxi, and the package delivery does not bring inconvenience to passengers.}

With our non-stop package delivery scheme, the delivery cost of a package can be reduced dramatically compared with traditional delivery methods. Thus the delivery payments from customers can be spared to design proper incentive mechanisms for taxi drivers. We believe this assumption can be realistic given proper incentive mechanism \cite{gao2015survey,zhang2016incentives}. As it is beyond the scope of this paper, we will not discuss the incentive mechanism.

\emph{Assumption 3: The traces of taxis are all trackable to make sure the security of the delivered packages.}

The package security is an important problem in our solution. Fortunately, there is a relatively comprehensive driver profiling mechanism in today's platform. By utilizing our solution, a taxi is requested to deliver a package all along. Combined with the on-record taxi trajectories in the third assumption and driver profiling mechanism, the package could be traceable, and liability cognizance is possible to be done if an accident happens.

\subsection{System Model}

\textbf{Network model:} We construct a block-based graph for the target area, as illustrated in Fig. \ref{fig_model}. We model it as a directed graph $G_0(\beta_0, \varepsilon_0)$, where $\beta_0$ is the node set and $\varepsilon_0$ is the edge set. More specifically, we divide the target area into $M$ blocks. For each block, we assign a representative node from which all other nodes in this block can be reached in a given time, and let the node represents the block. To simplify the parameters, we denote $b_i$ as the representative node of the $i^{th}$ block in our following modeling and analysis. For each edge $e \mathrm{\in} \varepsilon_0$, we allocate two weight, $P(b_j, b_i)$ is denoted as the predicted passenger-demand probability and $\delta(b_i,b_j)$ as the pass time of the shortest path between node $b_i$ and $b_j$. Note that, the probability weight $P$ of each edge varies according to different departure time $t$ in our model. Specifically, in block $b_i$ at time $t$, there is a passenger demand aiming to block $b_j$ with probability $P(b_j, b_i | t)$. The detailed calculation of the passenger demand prediction can be seen in Section \ref{sec_order}. 

\begin{figure}
\centering
\includegraphics[width=3.3in]{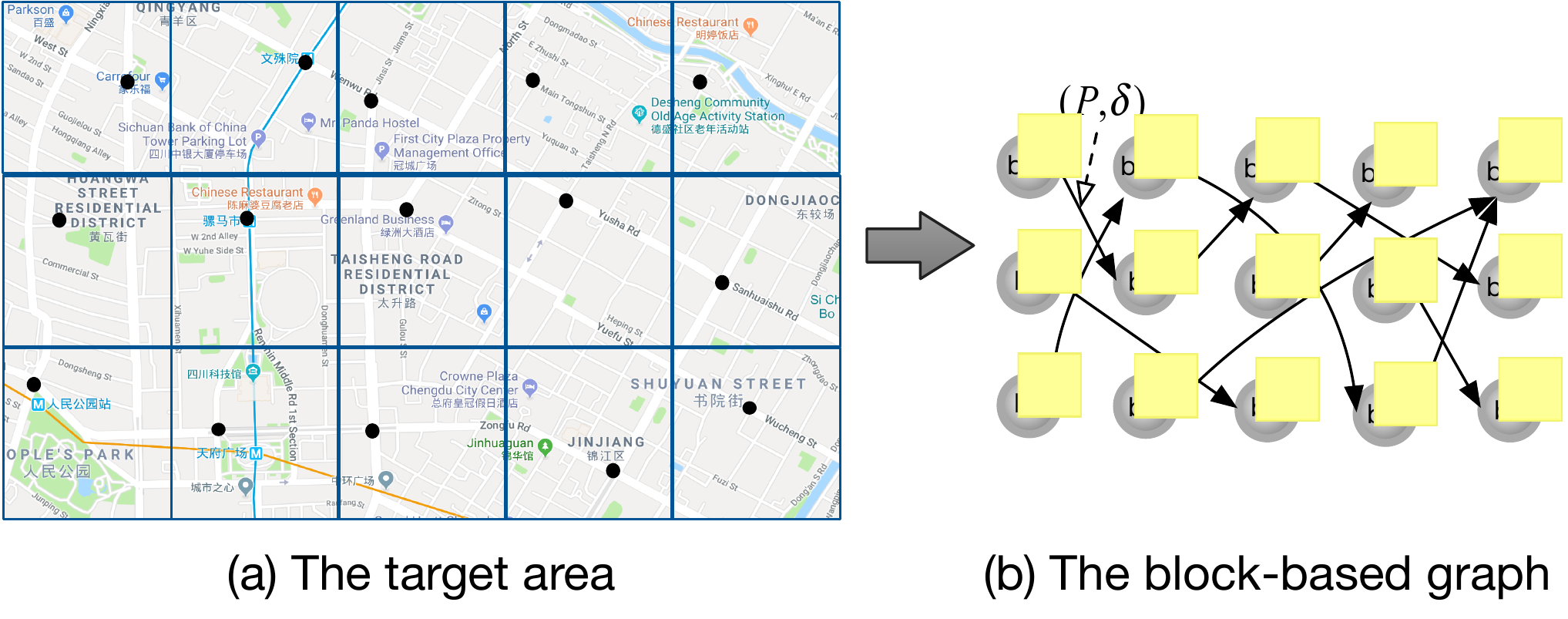}
\caption{An illustration of the network model. The target area is divided into multiple blocks, and each block has a representative node marked as a black dot. Each node in the constructed block-based graph represents a block in the target area. Each edge has two weights, i.e., the predicted passenger probability $P$ and the shortest path time $\delta$ between nodes.}
\label{fig_model}
\end{figure}

\textbf{Package request model:} Suppose the package set is denoted as $U$. For a package $u$ ($u\mathrm{\in}U$), the departure location, the destination, the departure time\footnote{The departure time of a package is denoted as the departure time of the matched taxi, which is different with the generation time (denoted as $u_{genT}$), in Section \ref{sec_experiment}, we will show the relationship between the generation time and the departure time of a package.}, and the delivery time of a package request $u$ are denoted as $u_{dep}$, $u_{des}$, $u_{depT}$, and $u_{desT}$, respectively. Note that, the package requests appear dynamically to the platform, the former three parameters of a package can be known as the package request appears. The platform needs to determine a route for a package request from $u_{dep}$ to $u_{des}$, and the route consists of several passenger paths. The delivery time of the package $u_{desT}$ depends on the results of the selected route.

\textbf{Online passenger model:} The information about a passenger order, including the departure location, departure time, destination, and delivery time are denoted as $v_{dep}$, $v_{depT}$, $v_{des}$ and $v_{desT}$, respectively. Note that, the passengers must be delivered in time. Hence, a passenger path is always the shortest path between $v_{dep}$ and $v_{des}$. Since passengers appear dynamically to the platform, the future passenger orders are not known.

\subsection{The NPD Problem Formulation}

The goal of this paper is to deliver packages successfully within given deadlines while reducing the extra cost incurred by the express speeding up. To achieve this goal, we propose the non-stop package delivery scheme, i.e., utilize one taxi to deliver a package all along without stop through passengers ridesharing. The taxi can delivery multiple passengers sequentially while carrying the package, and the package delivery route is composed of one or more passenger paths. In this solution, we take the passenger demands prediction into account when planning a package delivery route. Based on historical order data, we mine the potential regulars of the passenger demands and calculate the probability of different passenger flows. According to the passenger prediction results, we further select the route with the maximum probability for each package. 

Suppose a possible route for package $u$ is denoted as $r\textrm{=}(b_1, b_2, \dots, b_m, b_{m+1})$, where $b_1 \textrm{=} u_{dep}$ and $b_{m+1}\textrm{=}u_{des}$. There are $m$ passenger paths in this route, as $b_i\textrm{=}v^{(i)}_{dep}$, $b_j\textrm{=}v^{(i)}_{des}\textrm{=}v^{(i+1)}_{dep}$, $v^{(i)}$ denotes the $i^{th}$ passenger path in the planned route. Note that, the intermediate nodes within a passenger path is not listed in the package route, as we only concern the departure and destination information about the passenger path. Then along this route, the package will reach $b_i$ at time $t_{b_i}$, as $t_{b_1}\textrm{=}u_{depT}$, thus $t_{b_i}\mathrm{\buildrel \Delta \over = }u_{depT}\mathrm{+}\delta(b_1,b_2)\mathrm{+}\delta(b_2,b_3)\mathrm{+}\dots\mathrm{+}\delta(b_{i-1},b_i)$. Therefore, the selected route consists of consecutive passenger orders, where the departure time of the latter passenger is the destination time of the former passenger. Thus, the passenger probabilities along a route are independent of each other. The probability of this route can be calculated as: 
\begin{equation}
P(r)\buildrel \Delta \over = \prod\limits_{i\mathrm{\in}[1,m], j\mathrm{\in}[2,m+1]} {P({b_j},{b_i}|{t_{{b_i}}})}
\end{equation}

\begin{figure*}[t]
\centering
\includegraphics[width=5in]{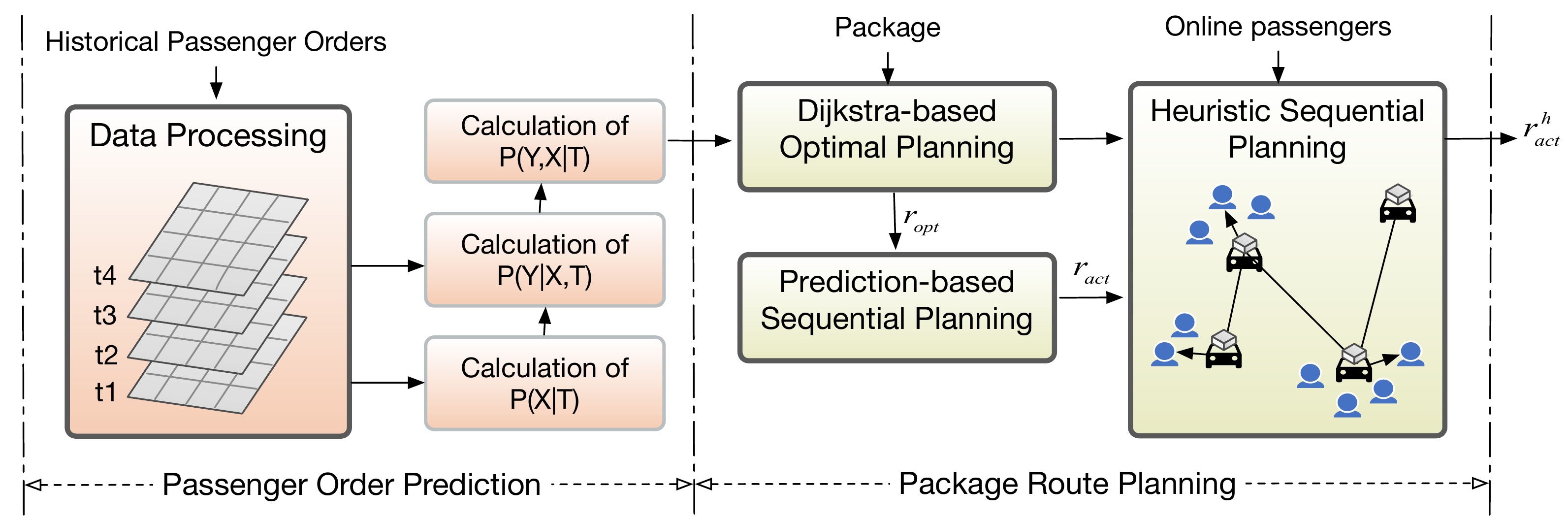}
\caption{The framework of PPtaxi.}
\label{fig_framework}
\end{figure*}

Then our NPD problem can be formulated as: Given a package request $u_{dep}$, $u_{des}$, $u_{depT}$, and the online passenger orders $V$, find the route with the maximum probability, under the delivery time constraint.
\begin{equation}
\label{equ_pr}
\begin{array}{l}
\mathop {\max }\limits_{r \in R} \ \ P(r)\\
s.t.\ \ {u_{desT}} \mathrm{-} {u_{depT}} \mathrm{\le} max T
\end{array}
\end{equation}
where $R$ is the set of all possible routes from $u_{dep}$ to $u_{des}$ with the departure time of $u_{depT}$, the constraint restricts that the package must be delivered with a given time length (for example, $24$ hours). Note that the total number of passenger paths $m$, the node$\_b_i$ departure time $t_{b_i}$, and the arrival time of the package $u_{desT}$ are depend on the selected route $r$.

\subsection{Problem Hardness}

According to our problem formulation, the goal of our NPD problem is to maximize the probability product of the multiple passenger paths in the selected route, and the constraint is the the package delivery delay. We can have the following theorem:

\begin{theorem}
The NPD problem is NP-Complete.
\end{theorem}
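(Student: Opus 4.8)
The plan is to prove both membership in NP and NP-hardness of the decision version of the problem: given an instance together with a target probability $\rho$, decide whether some route $r$ from $u_{dep}$ to $u_{des}$ satisfies both $P(r)\ge\rho$ and $u_{desT}-u_{depT}\le maxT$. Membership in NP is immediate. A route is a sequence of at most $|\beta_0|$ blocks, so it has polynomial length; given such a certificate one verifies in linear time that consecutive blocks are joined by edges, accumulates $\delta$ along the route to check the delay bound, and multiplies the edge probabilities $P(b_j,b_i\mid t_{b_i})$ to check $P(r)\ge\rho$. Hence the problem lies in NP.

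For hardness the key observation is that taking $-\log$ converts the multiplicative objective into an additive one: since each factor lies in $[0,1]$, maximizing $P(r)=\prod P(b_j,b_i\mid t_{b_i})$ is equivalent to minimizing $\sum\bigl(-\log P(b_j,b_i\mid t_{b_i})\bigr)$, a sum of non-negative edge costs. Under this transformation the NPD problem becomes exactly that of finding a minimum-cost path subject to an upper bound on a second, independent edge weight (the travel time $\delta$), i.e., the Weight-Constrained Shortest Path problem, which is classically NP-complete. I would therefore reduce from it: given a digraph with a length $\ell(e)$ and a weight $w(e)$ on every edge, source $s$, sink $t$, and bounds $K,W$, decide whether an $s$--$t$ path exists with total length $\le K$ and total weight $\le W$.

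The reduction maps such an instance onto a block graph $G_0$ by identifying vertices with blocks, setting $s=u_{dep}$ and $t=u_{des}$, letting the travel time $\delta$ play the role of $\ell$ with $maxT=K$, and encoding the weight constraint through time-independent probabilities $P(e)=c^{\,w(e)}$ for a fixed $c\in(0,1)$ with target $\rho=c^{\,W}$ (using constant probabilities is legitimate, since it is merely a special case of the general time-dependent model and therefore can only make NPD easier). Then $P(r)=c^{\sum_e w(e)}\ge\rho$ holds exactly when $\sum_e w(e)\le W$, so a feasible NPD route exists if and only if the Weight-Constrained Shortest Path instance is a yes-instance. Combined with membership in NP, this yields NP-completeness.

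The step I expect to be the main obstacle is making this reduction genuinely polynomial rather than merely logically correct. Because the encoding $P(e)=c^{\,w(e)}$ produces probabilities that are exponentially small in $w(e)$, I must keep the NPD instance polynomially sized, either by storing edge probabilities and the threshold in the log-domain, in which case the stored data are precisely the $w(e)$ and the problem is literally the source problem, or by a base choice that keeps only $O(\log(\sum_e w(e)))$ bits per edge. I would also verify that restricting to simple routes loses no generality: since the log-costs are non-negative and the times positive, deleting any cycle from a feasible route only decreases both $\sum_e w(e)$ and the accumulated $\delta$, so every feasible route yields a feasible simple $s$--$t$ path and the optimization over $R$ coincides with the simple-path version used by the source problem.
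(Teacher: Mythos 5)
Your proposal is correct, and at its core it follows the same route as the paper: both hinge on the $-\log$ transformation that turns route-probability maximization into additive cost minimization, and both identify the resulting problem with the Restricted Shortest Path (RSP) problem. The difference lies in the direction of the reduction and in what is actually established. The paper observes that, after the transformation, NPD ``can be regarded as a case of'' RSP and then concludes NP-completeness from the NP-completeness of RSP; read literally, that inference is backwards, since being a special case of an NP-complete problem only bounds the difficulty of NPD from above (ordinary shortest path is also ``a case of'' RSP, yet it is in P). Your proof supplies exactly the missing half: you reduce \emph{from} the weight-constrained shortest path problem \emph{to} NPD by exponentiating edge weights into probabilities ($P(e)=c^{w(e)}$, threshold $\rho=c^{W}$), which is the legitimate way to inherit hardness. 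You also do three things the paper omits entirely: you define the decision version with a threshold $\rho$ (without which ``NP-complete'' is not well posed for an optimization problem), you prove membership in NP, and you confront the bit-size issue created by exponentially small probabilities together with the question of whether cycles can be discarded (valid in your time-independent instances, where removing a cycle only improves both the accumulated cost and the delay). One minor caution: your membership argument bounds a route by $|\beta_0|$ blocks, but in the general time-dependent model a route may revisit the same block at different time slots; the correct bound is the number of time slots over which the probability table is given, which is still polynomial in the input, so the argument survives with that adjustment. Overall your proof is sound and strictly more careful than the paper's own sketch.
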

\begin{proof}
The NPD problem goal (Eq. (\ref{equ_obj})) can be transformed into the addition formulation (Eq. (\ref{equ_sum})) by a move $log(P)$ (Eq. (\ref{equ_log})). As $P(b_j,b_i|t)\mathrm{<}1$, thus $log(P(b_j,b_i|t))\mathrm{<}0$. Therefore, our goal can be further transformed into Eq. (\ref{equ_min}).
\begin{eqnarray}
\label{equ_obj}
&\max& \prod\limits_{i \in [1,m],j \in [2,m + 1]} {P({b_j},{b_i}|{t_{{b_i}}})} \\
\label{equ_log}
 \Leftrightarrow &\max& \log (\prod\limits_{i \in [1,m],j \in [2,m + 1]} {P({b_j},{b_i}|{t_{{b_i}}})} )\\
\label{equ_sum}
 \Leftrightarrow &\max& \sum\limits_{i \in [1,m],j \in [2,m + 1]} {\log (} P({b_j},{b_i}|{t_{{b_i}}}))\\
\label{equ_min}
 \Leftrightarrow &\min& \sum\limits_{i \in [1,m],j \in [2,m + 1]} {( - \log (} P({b_j},{b_i}|{t_{{b_i}}})))
\end{eqnarray}

We redefine the passenger-demand-probability weight of the edge in the network model as $-log(P(b_j, b_i | t))$. After the analysis and $-log(P(b_j, b_i | t))\mathrm{>}0$, the goal of our problem can be transformed into finding the shortest path under the constraint (${u_{desT}} \mathrm{-} {u_{depT}} \mathrm{\le} max T$). It turns out that given the predicted probabilities, the NPD problem can be regarded as a case of the well-known \emph{Restricted Shortest Path (RSP) problem} \cite{biswas2015restricted,lewis1983computers}. In RSP, a directed graph is given where each edge has a fixed travel time and travel cost, and the goal is to find a minimal-cost path subject to a hard path delay requirement. Since RSP is NP-Complete \cite{lewis1983computers}, we can thus easily prove that our problem NPD is also NP-Complete.
\end{proof}

\section{\textbf{PPtaxi}: A Solution To the NPD Problem}
\label{sec_solution}

As analyzed above, the goal of the NPD problem is to select the route with the maximum probability for each package request based on the passenger demands prediction. As the basement, we first should calculate the order probabilities of the future passengers across the city-wide at different time slots. The intuition to predict the passenger orders is to mining the historical passenger data and find potential regularities. Based on the prediction results, the route with the maximum probability should be selected via proper algorithms. As the NPD problem is NP-Complete, it is impossible to obtain the exact solution in polynomial time unless P = NP. Thus, approximate algorithms need to be designed to find the proper route for each package. 

According to the analysis, we design a solution as shown in Fig. \ref{fig_framework}. The solution is named Passenger \& Package in one taxi, i.e., \textbf{PPtaxi}. There are two main phases in our framework, i) passenger order prediction and ii) package route planning. 

In the phase-i, we utilize the historical data of passenger orders to mine the potential probability of passenger flows across the city at different time slots. More specifically, we first process the historical data and formulate the data into our models, then calculate the departure probability under a given departure time $P(X|T)$, the destination probability under the given departure location and departure time $P(Y|X,T)$, and the flow probability $P(Y,X|T)$, respectively. The detailed design of the order prediction phase is introduced in Section~\ref{sec_order}. 

In the phase-ii, we propose a three-stage mechanism to plan the package delivery route. First, we design a Dijkstra-based optimal route planning algorithm (DOP) based on the predicted passenger orders. The DOP can be achieved in Pseudo-polynomial time, which proves that the NPD problem is NP-Hard in the weak sense. Second, we find that as it is impossible to predict the city-wide future passenger orders with $100\%$ accuracy, the mismatching phenomenon between prediction results and dynamic passenger orders will always exist. To tackle the mismatching problem, we further propose a prediction-based sequential route planning algorithm (PSP). At last, due to the high computation complexity of the PSP algorithm, we further propose a heuristic sequential planning algorithm (HSP) to make a tradeoff between the computation efficiency and the solution effectiveness. The detailed illustration of the second phase is introduced in Section~\ref{sec_pkg}.

\subsection{Phase-i: Passenger Order Prediction}
\label{sec_order}

In this subsection, we aim to model the probability distribution of passenger demands by mining the historical data of orders. As studied in many papers, the passenger travel demands usually show weekly and daily pattern \cite{lin2018optimal,Zhang2017taxi}. Thus, we classify days into weekdays and weekends. Although we cannot know the future orders within a day exactly, the possible demand of a bunch of passengers shows potential regularity. For example, the residents within an apartment complexes trends to go to work among $8\mathrm{:}00 \mathrm{-}8\mathrm{:}30$ and come home among $5\mathrm{:}30 \mathrm{-} 6\mathrm{:}00$ on weekdays. Suppose the target area is divided into $M$ blocks, and one day is split into $N$ slots. We can formulate the order probability from block $b_i$ to $b_j$ departing at time $t_k$ ($1 \mathrm{\le} i,j  \mathrm{\le} M$, $ 1 \mathrm{\le} k \mathrm{\le} N $) as $P(Y \textrm{=} {b_j},X \textrm{=} {b_i}|T \textrm{=} {t_k})$, where $X$ denotes the departure location, $Y$ denotes the destination, and $T$ denotes the departure time. 

In the conditional probability$P(Y,X|T)$, the joint distribution of two parameters ($Y, X$) under different time slots need to be discussed. According to the multiplication formula of the conditional probability, which is known as $P(Y,X,T) \textrm{=} P(Y|X,T) \mathrm{\times} P(X|T) \mathrm{\times} P(T)$, and $P(Y,X|T) \textrm{=} P(Y,X,T)/P(T)$, we can have the following equations:
\begin{equation}
\label{equ1}
\begin{array}{l}
P(Y \textrm{=} {b_j},X \textrm{=} {b_i}|T \textrm{=} {t_k})\\
 \textrm{=} P(Y \textrm{=} {b_j}|X \textrm{=} {b_i},T \textrm{=} {t_k}) \mathrm{\times} P(X \textrm{=} {b_i}|T \textrm{=} {t_k})
\end{array}
\end{equation}

Eq. (\ref{equ1}) indicates that, at slot $t_k$ and at block $b_i$, there is a travel demand with probability $P(X \textrm{=} {b_i}|T \textrm{=} {t_k})$, and the passengers goes to block $b_j$ with probability $P(Y \textrm{=} {b_j}|X \textrm{=} {b_i},T \textrm{=} {t_k})$. In the following content, we will discuss the conditional departure probability $P(X|T)$ and the conditional destination probability $P(Y|X,T)$, respectively.

\subsubsection{Calculation of Conditional Departure Probability}

In this subsection, we aim to model the probability distribution of each departure block. To achieve this goal, we utilize Bayesian formula\footnote{At least one order is existed in each slot in our experiment, which can be seen in Section \ref{sec_experiment}} to express the conditional probability of the departure blocks $P(X|T)$ as:
\begin{equation}
\label{equ3}
P(X \textrm{=} {b_i}|T \textrm{=} {t_k}) \textrm{=} \frac{{P(T \textrm{=} {t_k}|X \textrm{=} {b_i}) \mathrm{\times} P(X \textrm{=} {b_i})}}{{\sum_i {P(T \textrm{=} {t_k}|X \textrm{=} {b_i}) \mathrm{\times} P(X \textrm{=} {b_i})} }}
\end{equation}
There are two parameters need to be calculated in this formulation, $P(T|X)$ and $P(X)$.

\begin{figure}
\centering
\includegraphics[width=2.2in]{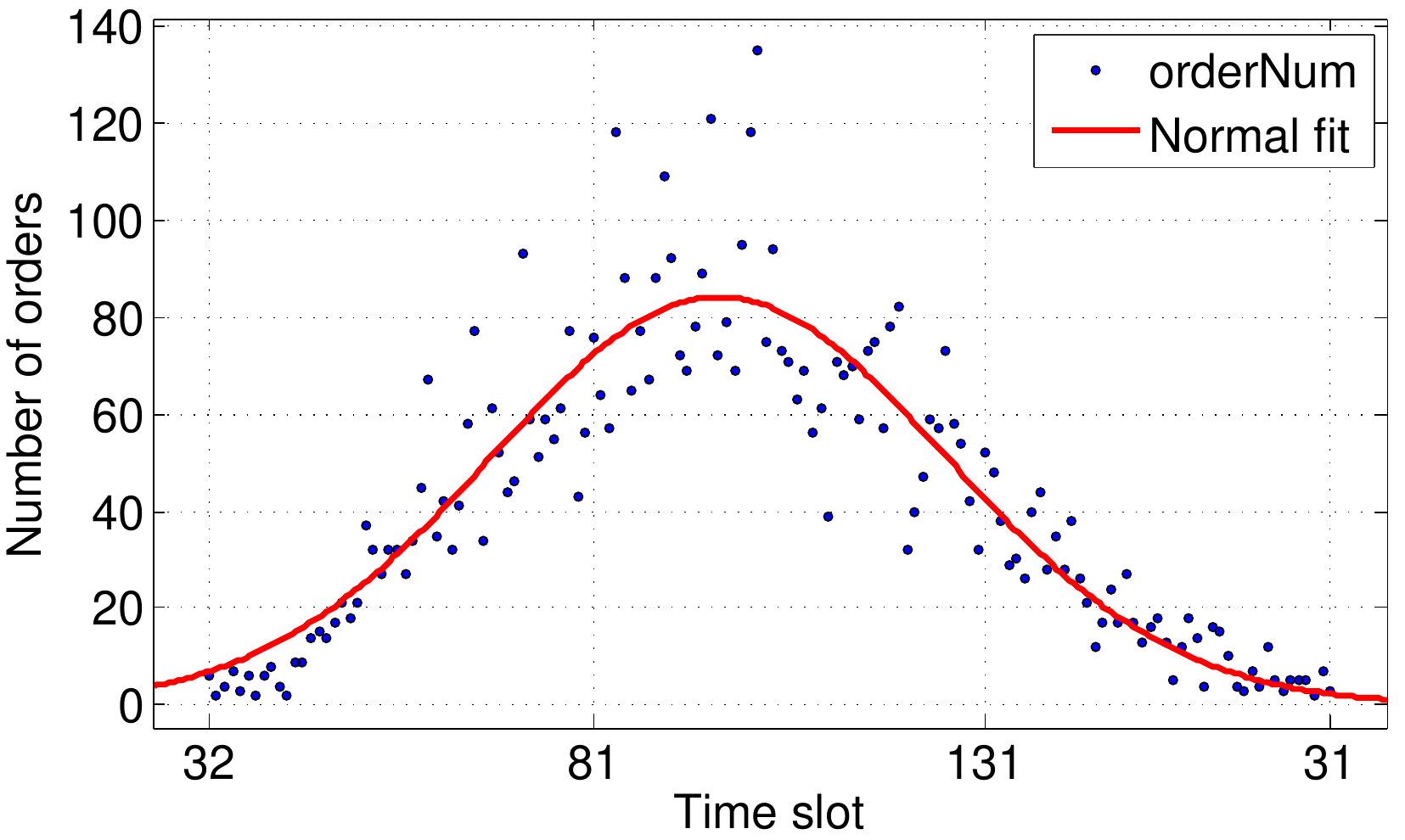}
\caption{An illustration of the passenger order numbers in different time slots within a given location and its normal fit. The scatterplot is the number of passenger orders under different departure time in historical data. The solid line is the fitting curve of the scatter plot distribution. It shows that the number of passenger orders under different departure time fits the normal distribution.}
\label{fig_N2}
\end{figure}

To calculate the value of $P(T|X)$, we need to know the distribution of $(T|X)$. Fig. \ref{fig_N2} shows the scatterplot of the orders' departure time from a certain block within one day. We can find that the shape of the scatterplot is very similar to that of a Gaussian distribution. By studying many departure block and different days, we have similar observations. Therefore, we use Gaussian distribution to estimate the conditional probability of the departure time:
\begin{equation}
\label{equ8}
(T|X \textrm{=} {b_i}) \sim N({\mu _i},\sigma _i^2)
\end{equation}

Since variable $T$ takes circular values and then repeat, the mean $\mu_i$ and variance $\sigma_i$ cannot be estimated using traditional methods. We utilize the method proposed in \cite{Zhang2017taxi} to estimate the two parameters:
\begin{equation}
\left\{ \begin{array}{l}
\mathop {\min }\limits_\mu  \sum\limits_k^{N} {[|{{(|{t_k} \mathrm{-} \mu | \mathrm{-} \frac{N}{2}| \mathrm{-} \frac{N}{2}]}^2}} \\
s.t.\ \ \mu \mathrm{\in} [1,{N}]
\end{array} \right.
\end{equation}
\begin{equation}
{\sigma ^2} \textrm{=} \frac{1}{{{N_s} \mathrm{-} 1}}\sum\limits_{k = 1}^{N} {[|{{(|{t_k} \mathrm{-} \mu | \mathrm{-} \frac{N}{2}| \mathrm{-} \frac{N}{2}]}^2}}
\end{equation}

According to the normal distribution, the value of $P(T|X)$ can be calculated as:
\begin{equation}
P(T \textrm{=} t_k | X \textrm{=} b_i) \textrm{=} \int_{T \mathrm{\in} t_k} P(T|X \textrm{=} b_i)
\end{equation}

Furthermore, the value of $P(X)$ can be calculated as:
\begin{equation}
P(X{\rm{ = }}{b_i}){\rm{ = }}\frac{{freq({b_i})}}{{\sum\limits_i {freq({b_i})} }}
\end{equation}

Thus, the value of $P(X|T)$ can be calculated by combining the value of the $P(T|X)$ and the $P(X)$.

\subsubsection{Calculation of Conditional Destination Probability}

In this subsection, we aim to model the probability distribution of each destination block. With Bayesian formula, the conditional departure probabilities $P(Y|X,T)$ can be expressed as:
\begin{equation}
\label{equ2}
P(Y \textrm{=} {b_j}|X \textrm{=} {b_i},T \textrm{=} {t_k}) \textrm{=} \frac{{P(X \textrm{=} {b_i},T \textrm{=} {t_k}|Y \textrm{=} {b_j}) \mathrm{\times} P(Y \textrm{=} {b_j})}}{{\sum_j {P(X \textrm{=} {b_i},T \textrm{=} {t_k}|Y \textrm{=} {b_j}) \mathrm{\times} P(Y \textrm{=} {b_j})} }}
\end{equation}
Two parameters need to be calculated in this formulation, $P(X,T|Y)$ and $P(Y)$. To calculate the probability of $P(X,T|Y)$, we need to know the distribution of $(X,T|Y)$. According to~\cite{Zhang2017taxi}, the same passenger tends to go to the same destination at the similar time and departure location. Thus, the distribution of the departure time and the departure longitude and latitude for each destination point of each passenger fit a three-dimensional Gaussian distribution, i.e., for a passenger $v$, $(Lat_v,Lng_v,T_v|Y_v \textrm{=} {b_j}) \sim {N_3}({\mu_v},{\Sigma_v})$, where $Lat$ is the departure latitude, $Lng$ is the departure longitude, $\mu$ and $\Sigma$ are the mean and covariance matrix of the three-variate Gauss distribution, respectively.

The distribution of the departure time and the departure location (longitude and latitude) for a destination $b_j$, can be regarded as the summary of the distributions of multiple passengers, whose destination locations are within $b_j$. Suppose the distribution of each passenger is independent, the sum of them is also consistent with Gaussian distribution \cite{eisenberg2008sum}. Hence, we can get the following hypothesis:
\begin{equation}
\label{equ5}
(Lat,Lng,T|Y\textrm{=} {b_j}) \sim {N_3}({\mu},{\Sigma})
\end{equation}
To verify this hypothesis, the Mardia's test \cite{mardia1970measures} can be employed to examine the goodness of fit.

With the distribution formulation,  we can calculate the $P(X,T|Y)$ with the definite integral:
\begin{equation}
\label{equ6}
P(X \textrm{=} {b_i},T \textrm{=} {t_k}|Y \textrm{=} {b_j}) \textrm{=} \int\limits_{\scriptstyle Lat,Lng \mathrm{\in} {b_i},\hfill\atop
\scriptstyle T\mathrm{\in}t_k} {P(Lat,Lng,T|Y \textrm{=} {b_j})}
\end{equation}

The probability of $P(Y)$ can be calculated as:
\begin{equation}
\label{equ7}
P(Y \textrm{=} {b_j}) \textrm{=} \frac{{freq({b_j})}}{{\sum_j {freq({b_j})} }}
\end{equation}

Thus, the value of $P(Y,X|T)$ can be calculated according to the above analysis.

\subsection{Phase-ii: Package Route Planning}
\label{sec_pkg}

\begin{figure}
\centering
\includegraphics[width=2in]{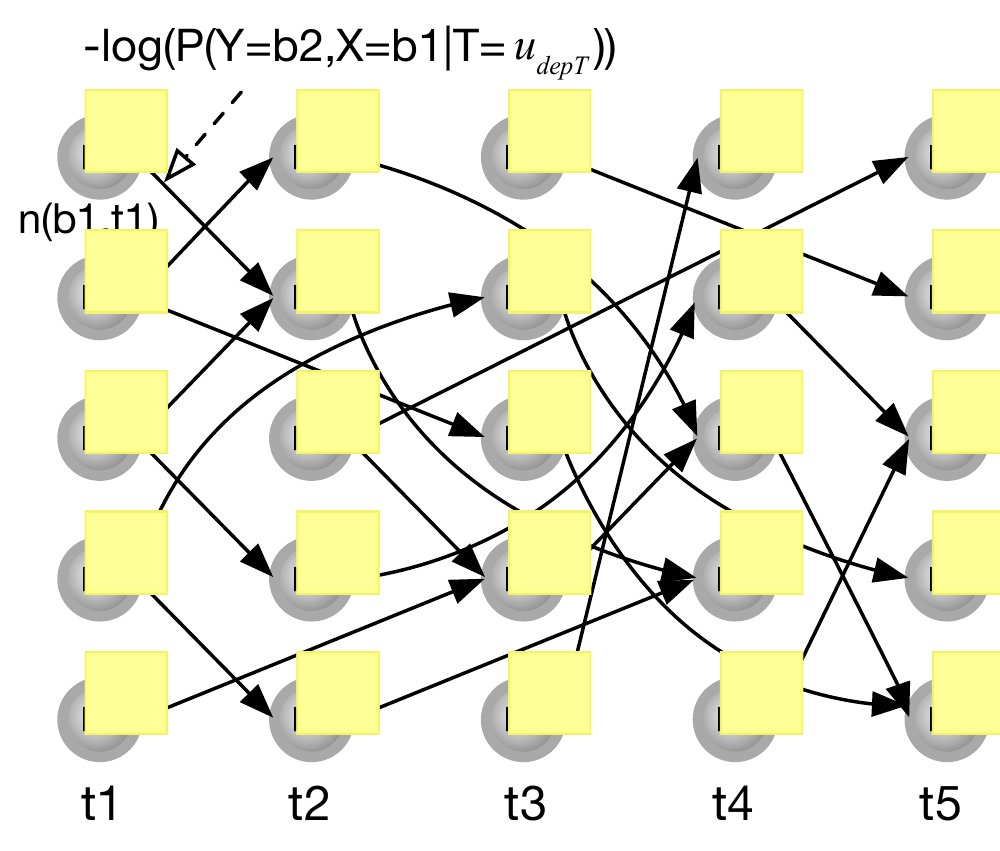}
\caption{An example of $G$. Each column denotes different blocks, each row denotes vary slots. The weight of the edge between nodes $n(b_i, t_k)$ and $n(b_j, t_g)$ is defined as $-log(P(Y,X|T))$, if $t_g\mathrm{-}t_k\textrm{=}\delta(b_i, b_j)$.}
\label{fig_newnet}
\end{figure}

In this subsection, we aim to select the route $r$ with the maximum probability $P(r)$ for a package request. As the NPD problem is NP-Complete, approximate route planning algorithm needs to be designed. The approximate algorithm should meet the following two kinds of requirements: computation efficiency and solution effectiveness. The route selection must be completed in real time because the passenger order dispatch must be timely in practice and the order dispatch results determine the package delivery path. The solution results should be effective enough, as too low package delivery rate makes our solution problematic. In order to design a route planning algorithm meeting the up two kinds of requirements, we propose three algorithms step by step. We first consider the prediction results and propose the Dijkstra-based optimal route planning algorithm (DOP) to find the ideal optimal route. Then we discuss the mismatching phenomenon between the prediction results and real-time order situations and propose a prediction-based sequential route planning algorithm (PSP). At last, we analyze the key ideas of the up two algorithms, and further design a heuristic sequential route planning algorithm (HSP), which takes appropriate tradeoff between the two kinds of requirements.

\subsubsection{Dijkstra-based Optimal Planning}

When a package request appears, given the package information $u_{dep}$, $u_{depT}$, $u_{des}$, and order prediction results $P(Y,X|T)$, the platform needs to plan a route for the package. The objective is to maximize the probability of the planning route, with the constraint that the package should be delivered within a given time interval. In this subsection, we propose a Dijkstra-based algorithm to solve the formulated problem optimally in pseudo-polynomial time.

first, we construct a block-slot-based graph $G(\beta,\varepsilon)$ as illustrated in Fig. \ref{fig_newnet}, according to the prediction results $P$ and the demand network model $G_0(\beta_0, \varepsilon_0)$. In the graph $G$, each column represents the nodes in the same slot, each row represents the same node across different slots. An edge between two nodes $n(b_i, t_k) \mathrm{\in} \beta$ and $n(b_j, t_g) \mathrm{\in} \beta$ is constructed if $t_g \mathrm{-} t_k\textrm{=}\delta(b_i,b_j)$, and the edge weight is $-log(P(b_j, b_i | t_k))$.

second, the optimal route can be obtained through $G$. As each probability between nodes is less than $1$, there is no negative weight for all edges in $G$. The shortest path problem in graph $G$ with a given start node can be solved by the Dijkstra algorithm \cite{dijkstra1959note}. As we divide one day into multiple slots, with a limited $maxT$ in the constraint, the possible delivery destination node in the graph $G$ is numerable. For example, if $maxT\textrm{=}3$ slots, i.e., $u_{depT}\mathrm{<}u_{desT}\mathrm{\le}u_{depT}\mathrm{+}3$, the possible delivery destination node in the graph $G$ includes $n(u_{des},u_{depT}+1)$, $n(u_{des},u_{depT}+2)$, and $n(u_{des},u_{depT}+3)$. Therefore, the NPD problem can be solved by calculating the shortest path between the start node $n(u_{dep},u_{depT})$ and each destination node, and selecting the path with the minimum weight. 

\begin{theorem}
The Dijkstra-based optimal route planning algorithm solve problem NPD with the time complexity of $O(M^2\mathrm{\times}maxT^2)$.
\end{theorem}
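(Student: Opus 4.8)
The plan is to bound the size of the relevant portion of the block-slot graph $G$ and then invoke the standard worst-case running time of Dijkstra's algorithm on that subgraph. The crucial observation is that, although the full graph $G$ contains $M \times N$ nodes (one per block per slot over an entire day), the delivery delay constraint $u_{desT} - u_{depT} \le maxT$ confines every feasible route to the time window $[u_{depT}, u_{depT} + maxT]$. Hence only the nodes $n(b_i, t_k)$ whose slot $t_k$ lies in this window can ever appear on a valid path and need to be settled by the search, which removes the dependence on $N$.

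First I would count the nodes in this active window. There are at most $maxT + 1$ relevant slots, and each slot contains at most $M$ block-nodes, so the number of nodes Dijkstra can possibly process is $|V| = O(M \times maxT)$. Next I would bound the edges using the construction rule: an edge leaves $n(b_i, t_k)$ only toward some $n(b_j, t_k + \delta(b_i, b_j))$, and there are at most $M$ candidate destination blocks $b_j$ for each source, giving an out-degree of at most $M$ per node and thus $|E| = O(M \times |V|) = O(M^2 \times maxT)$.

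Then I would apply the array-based implementation of Dijkstra, whose running time on a graph with node set $V$ and edge set $E$ is $O(|V|^2 + |E|)$. Substituting the bounds above yields $O((M \times maxT)^2 + M^2 \times maxT) = O(M^2 \times maxT^2)$, since the quadratic-in-$maxT$ term dominates. Because a single source-rooted Dijkstra run already computes shortest distances to every settled node, the $maxT$ candidate destinations $n(u_{des}, u_{depT}+1), \dots, n(u_{des}, u_{depT}+maxT)$ can be compared in an additional $O(maxT)$ time, which is absorbed into the bound; picking the minimum-weight one returns the optimal route.

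The main obstacle I anticipate is not the Dijkstra step itself but the careful justification of the node count, specifically arguing rigorously that the delay constraint restricts the explored subgraph to $O(maxT)$ slots so that the $N$ dependence drops out, and confirming that the per-node out-degree is genuinely $O(M)$ rather than larger. Once these two structural bounds are pinned down, the stated complexity follows directly from the textbook analysis of Dijkstra on a dense graph.
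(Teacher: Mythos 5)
Your proposal is correct and follows essentially the same route as the paper's proof: both bound the relevant time-expanded graph by the $maxT$-slot window (giving $O(M \times maxT)$ nodes) and then apply the array-based Dijkstra bound to obtain $O(M^2 \times maxT^2)$, with graph construction costing only $O(M^2 \times maxT)$. The only difference is that the paper states these two facts in one line each (citing the textbook for Dijkstra's complexity), whereas you explicitly justify the node count, the $O(M)$ out-degree, and the final comparison over the $maxT$ candidate destination nodes -- details the paper leaves implicit.
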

\begin{proof}
The construction complexity of the adjacency matrix of $G$ is $O(M^2\mathrm{\times}maxT)$. The complexity of the Dijkstra algorithm is $O(M^2\mathrm{\times}maxT^2)$ \cite{cormen2009introduction}. Therefore, the complexity of the Dijkstra-based algorithm is $O(M^2\mathrm{\times}maxT^2)$. 
\end{proof}

The Dijkstra-based optimal route planning algorithm (DOP) has a pseudo-polynomial time complexity in the sense that the complexity is polynomial in the value of the problem input $maxT$, but is exponential in the length of the problem input, i.e., $log(maxT)$ \cite{garey1978strong,lin2018optimal}. Thus the solution is amenable for practical implementation. Note that Theorem 2 together with Theorem 1 shows that our problem NPD is actually \emph{NP-hard in the weak sense} \cite{lewis1983computers}.

\begin{algorithm}[!t]
\caption{Dijkstra-based Optimal Planning (DOP)}
\label{algo_dop}
\KwIn{order prediction results $P$, the graph $G_0$, $maxT$, the package request $u_{dep}, u_{depT}, u_{des}$}
\KwOut{the minimum route distance $mindist$, the optimal route $r_{opt}$}
{Construct $G$ based on the $P$ and $G_0$, with $maxT$;}\\
{$[distMat, routMat]=Dijkstra(G, G(u_{depT},u_{dep}))$;
//$[x,y]\textrm{=}Dijkstra(A,s)$ returns the distance and path of all destinations from start node $s$ in the graph $A$;}\\
{$mindist=min(distMat[u_{dep}+1:u_{depT}+maxT])$;}\\
{$r_{opt}=routMat(find(distMat==mindist),:)$;}
\end{algorithm}

However, although an optimal route can be selected based on the predicted probabilities, the selected route does not consider the real-time passenger orders. Since the optimal route is based on the statistical results, the prediction results may be \textbf{mismatched} under the real-time order situations. As it is hard to predict the future passenger situations accurately among the city-wide scope and across every time slot, the mismatching phenomenon between the prediction results and the real-time passenger orders will always exist. Moreover, even the passenger prediction of one hop is accurate enough, the cumulative mismatching error along the multiple hops of a route is inevitable (unless the prediction is $100\%$ accurate, which is almost impossible to realize without some strong assumptions). For example, suppose the average prediction accuracy of one hop passenger order is $90\%$, and there are four hops in a route, the prediction accuracy of this route is $(90\%)^4$, i.e., as low as $65.61\%$. Thus, a more realistic algorithm is needed to solve the NPD problem in online scenarios.

\subsubsection{Prediction-based Sequential Planning}

In this subsection, we further propose a supplementary algorithm to tackle the mismatching problem, based on the DOP algorithm. The fundamental idea of the algorithm is that, when the mismatching occurs, we replan the ongoing route and select an optimal passenger order appeared to transport. Thus, we optimize the selected route step-by-step. On the one hand, if the online passenger orders correspond with the predicted results, and there is an expected order appears, the optimal route will be adopted. On the other hand, in each mismatched step where the predicted order of the optimal route does not appear in time, a proper passenger order should be selected. In this situation, a candidate set of potential passengers are first selected. If a passenger is within the same block and time slot with the package, the passenger is assumed as a potential order for the package. Within the candidate passengers, the probability from each passenger's destination to the package's destination is computed. The passenger with the maximum probability is regarded as the optimal one among the potential passengers and is selected to transport in this step.

\begin{algorithm}[!t]
\caption{Prediction-based Sequential Planning (PSP)}
\label{algo_psp}
\KwIn{online passengers $V$, order prediction results $P$, the graph $G_0$, $maxT$, the package request $u_{dep}, u_{depT}, u_{des}$}
\KwOut{package delivery route $r_{act}$}
$lenT=0$; $r_{act}=\emptyset$;\\
$r_{opt}=DOP(P, G_0, maxT, u_{dep}, u_{depT}, u_{des})$;\\
\While{$u_l \ne u_{des}$ and $lenT < maxT$}
{$V_{cand}\leftarrow \{\forall v| v_{dep}\textrm{=}u_l, v_{depT}\textrm{=}u_t\}$;\\
\eIf{ find($v \in V_{cand}\ \& \ v_{des}=u_{des}$)}
{deliver $v$ and $u$;}
{\eIf{find a passenger $\hat v$ ($\hat v \in V_{cand}$) along $r_{opt}$}
{$v\leftarrow \hat v$:}
{\For{$\hat v \in V_{cand}$}
{$(d, r)=DOP(P, G_0, maxT\mathrm{-}lenT\mathrm{-}(\hat v_{desT}\mathrm{-}\hat v_{depT}),$\protect\\
$ \hat v_{des}, \hat v_{desT}, u_{des})$;}
$v \leftarrow \hat v \ with\ min\ d$;\\
$r_{opt}=r$;}}
deliver $v$;\\
$lenT = lenT + (v_{desT} - v_{depT})$;\\
$r_{act}=[r_{act}; v]$;}
\end{algorithm}

The pseudo code of the prediction-based sequential planning algorithm (PSP) is presented in Algorithm \ref{algo_psp}. The $lenT$ is the current time length of the package delivery, $u_l$ and $u_t$ denote the current location and slot of the package, respectively. The $maxT$ denotes the maximum time length. If $lenT$ is less than $maxT$, the taxi will continue transporting passengers while carrying the package. If $lenT$ is larger than $maxT$ and the package has not reached the destination, the package delivery fails. For the failed packages, we request the taxi to deliver the packages directly to the destination without transporting any passengers.

The PSP algorithm tackles the mismatching problem of the DOP algorithm. However, the PSP algorithm also magnifies the defect of the DOP algorithm. As we analyzed before, the NPD problem is NP-Hard in the weak sense, and the DOP algorithm can find the optimal solution in Pseudo-polynomial time. Although the DOP algorithm is amenable for practical implementation, too many iterations in applying the DOP algorithm makes it \textbf{time consuming}. With the PSP algorithm, the DOP algorithm has to be repeated for $(N_{hops}\mathrm{\times}N_{orders})$ times. The $N_{hops}$ denotes the number of passenger orders transported during one package delivery; the $N_{orders}$ denotes the average order number within $V_{cand}$ in each hop. For example, if the computation time of the DOP algorithm for a package is $20$ seconds\footnote{The details of the time cost and average order number are introduced in Section \ref{sec_experiment}.}, the $N_{orders}$ is around $20$ and $N_{hops}$ is $4$, then the computation time of the PSP algorithm reaches $27$ minutes, which is too long to be used in practice. In realistic, the passenger dispatch (which is performed by the cloud-based platform) must be completed in real time, as the moving speed of taxis is fast and location update is frequent. Thus, a more time-efficient algorithm is needed to solve the NPD problem in real-life scenarios.

\subsubsection{Heuristic Sequential Planning}

In Section \ref{sec_prob}, we formulated the NPD problem of the route probability maximization, which is the product of passenger probabilities along the route. According to our system model, the passenger probability in a slot is always less than $1$. As there are many blocks within our target area, the probability value is typically small. Thus the more hops of orders, the smaller the value of the probability product. This phenomenon determines that most of the selected route produced by the DOP algorithm only consists of one-hop passenger order, unless the one-hop passenger probability between the package origin and destination is too low. Therefore, if we only consider the one-hop probability when selecting the next-hop order from the candidate orders in each step, the time cost can be much reduced. Accordingly, we propose a heuristic sequential route planning algorithm (HSP) to speed up the calculation process.

\begin{algorithm}[!t]
\caption{Heuristic Sequential Planning (HSP)}
\label{algo_hsp}
\KwIn{online passengers $V$, order prediction results $P$, $maxT$, the package request $u_{dep}, u_{depT}, u_{des}$}
\KwOut{package delivery route $r^h_{act}$}
$lenT=0$; $r^h_{act}=\emptyset$;\\
\While{$u_l \ne u_{des}$ and $lenT < maxT$}
{$V_{cand}\leftarrow \{\forall v| v_{dep}\textrm{=}u_l, v_{depT}\textrm{=}u_t\}$;\\
\eIf{ find($v \in V_{cand}\ \& \ v_{des}=u_{des}$)}
{deliver $v$ and $u$;}
{$v\leftarrow min.\{\ -log(P(u_{des}, v_{des}|v_{desT}))\}$;}
deliver $v$;\\
$lenT = lenT + (v_{desT} - v_{depT})$;\\
$r^h_{act}=[r^h_{act}; v]$;}
\end{algorithm}

The fundamental idea of HSP are as follows. In each passenger selection step, if a passenger has the same destination as the package, the passenger will be delivered as the next hop. Otherwise, we greedily hope that the next passenger dispatch can make the taxi deliver the package successfully; thus, we select the passenger who has the minimum $-log(P(u_{des}, v_{des}|v_{desT}))$. The pseudo code of the HSP algorithm is presented in Algorithm \ref{algo_hsp}. In HSP, we replace the DOP algorithm with the $-log(P(u_{des}, v_{des}|v_{desT}))$, as multiple calling of the DOP algorithm incurs the high time complexity in the PSP algorithm. Note that, the HSP algorithm only perform well when the route selected by the DOP algorithm consists of one-hop passenger order. Thus the HSP algorithm may perform less satisfactorily than the PSP algorithm in a general view. Nevertheless, by adopting the HSP algorithm, the computation time can be reduced. Therefore, we conclude that if the computation resource is abundant, the PSP algorithm is the optimal adoption; otherwise, the HSP algorithm is a better choice.

\section{Performance Evaluation}
\label{sec_experiment}

In this section, we evaluate the performance of \textbf{PPtaxi}. We first introduce the experimental setup, including the datasets, and the benchmark algorithms used for comparison. Then we show the results of algorithm effectiveness and efficiency. Note that, in the experiments of this section, the route planning algorithm of our \textbf{PPtaxi} framework utilizes the HSP algorithm unless otherwise specified.

\subsection{Experimental setup}

\textbf{Datasets}: We use a real-world dataset\footnote{https://gaia.didichuxing.com} collected and published by DiDi Chuxing, which is a popular online taxi-taking platform in China. The dataset includes passenger order data and taxi trajectory data from $2016.11.1$ to $2016.11.30$ in the city of Chengdu, China. The information of each passenger order consists of the order ID, the departure time, the departure location, the destination, and the arrival time; the information of each taxi consists of the taxi ID, the order ID, the time stamps and locations with a sampling rate of $2\mathrm{\sim}4$ seconds. 

We target at the area with the densest passenger orders, the longitude from $104$ to $104.12$ and the latitude from $30.6$ to $30.72$, which is the most central area in Chengdu. Then, the target area is divided into $10 \mathrm{\times} 10$ blocks, the size of each block is around $1.56\ km^2$, and one day is split into $144$ slots, the length of a time slot is set as $10$ minutes. As the distribution of passenger orders shows variable regularities in different days \cite{lin2018optimal,Zhang2017taxi}, we choose the data in the first four Tuesdays as the training set to calculate the order probabilities and the data in the fifth Tuesday as the testing set to evaluate our framework. Within the target area, there are total $540,864$ orders in the training set, $150,412$ orders and $40,772$ taxis in the testing set. Moreover, due to the specialty of the DiDi platform, where most of the taxis are private cars, the average working time length of each driver is limited, around $25.15$ slots in our statistics. The statistics of the datasets and parameters are listed in the Tab. \ref{tab_dataset}, where the $lng.$ denotes the longitude and the $lat.$ denotes the latitude. 

\begin{table}[!t]
\caption{Statistics of the Datasets and Parameters}
\label{tab_dataset}
\centering
\begin{tabular}{|c|c|c|}
\hline
 &  Properties & Statistics  \\
 \hline
\multirow{3}*{Parameters} & target area & \tabincell{c}{$lng.(104,104,12)$\\$lat.(30.6,30.72)$} \\
\cline{2-3}
~ & block size & $1.56\ km^2$ \\
\cline{2-3}
~ & slot length & $10\ mins$ \\
\hline
Training set & order number & $540,864$ \\
\hline
\multirow{3}*{Testing set} & order number & $150,412$ \\
\cline{2-3}
~ & taxi number & $40,772$ \\
\cline{2-3}
~ & \tabincell{c}{average working time\\of each taxi} & $25.15$ slots \\
\hline
\end{tabular}
\end{table}

\textbf{Package Requests:} Since the datasets do not contain information about package delivery, we generate the package information manually. We assume that the package number is related with populations in an area, and more inhabitants often denote more packages. Thus we generate the package information based on population densities in different districts. The target area in our experiment covers five districts in Chengdu, i.e., Jinniu district, Qingyang district, Wuhou district, Chenghua district, and Jinjiang district. The population density ($2015$) in each district, which can be found on the Chengdu Bureau of Statistics Internet\footnote{http://www.cdstats.chengdu.gov.cn}, is $6990$, $9822$, $8755$, $6623$, and $8233$, respectively. Accordingly, there are around $600,000$ inhabitants in our target area.

\begin{figure*}[t]
\centering
\subfigure[$maxT\mathrm{=}3\ hours$]{
\label{fig_srt3h}
\includegraphics[width=0.4\textwidth]{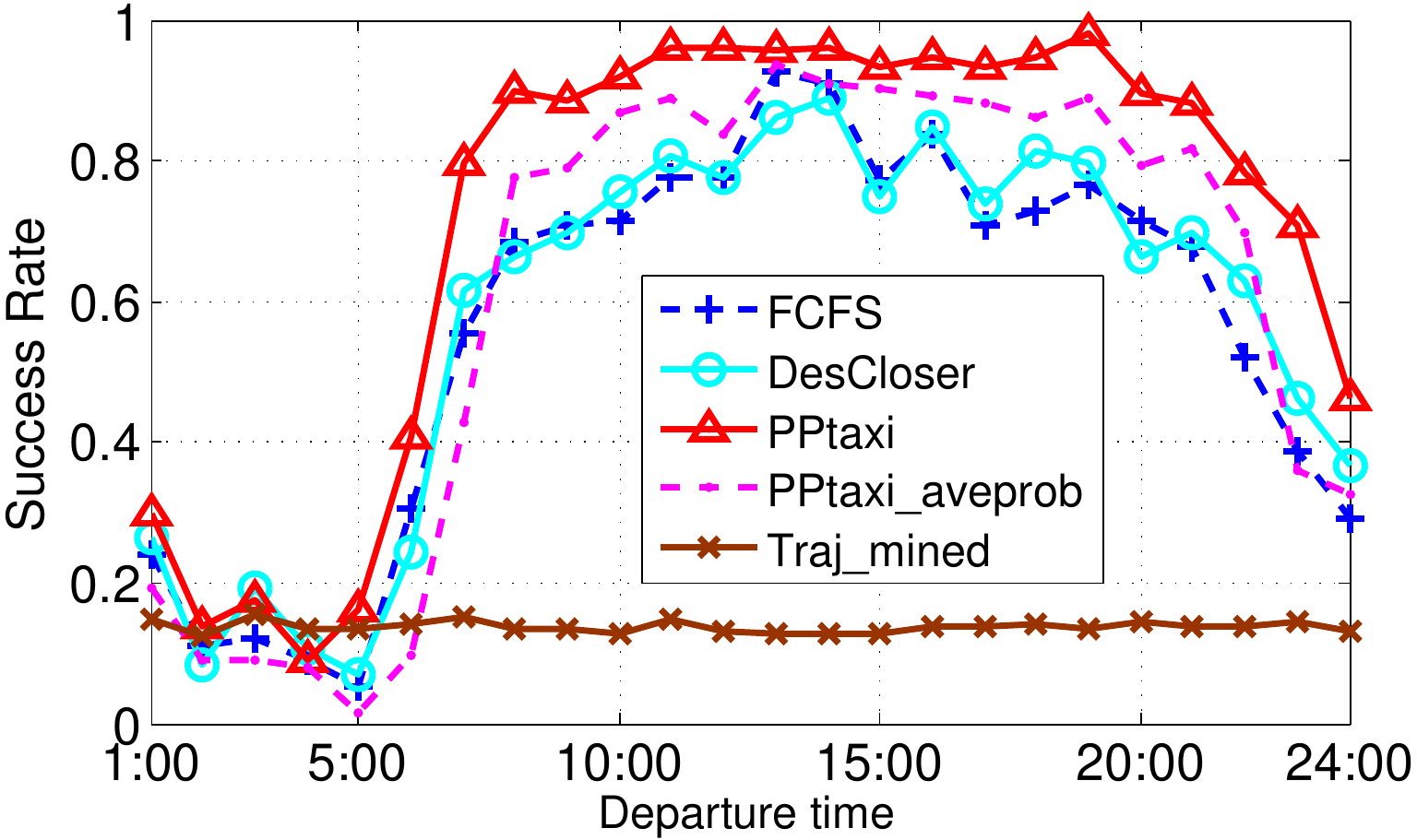}}
\subfigure[$maxT\mathrm{=}10\ hours$]{
\label{fig_srt10h}
\includegraphics[width=0.4\textwidth]{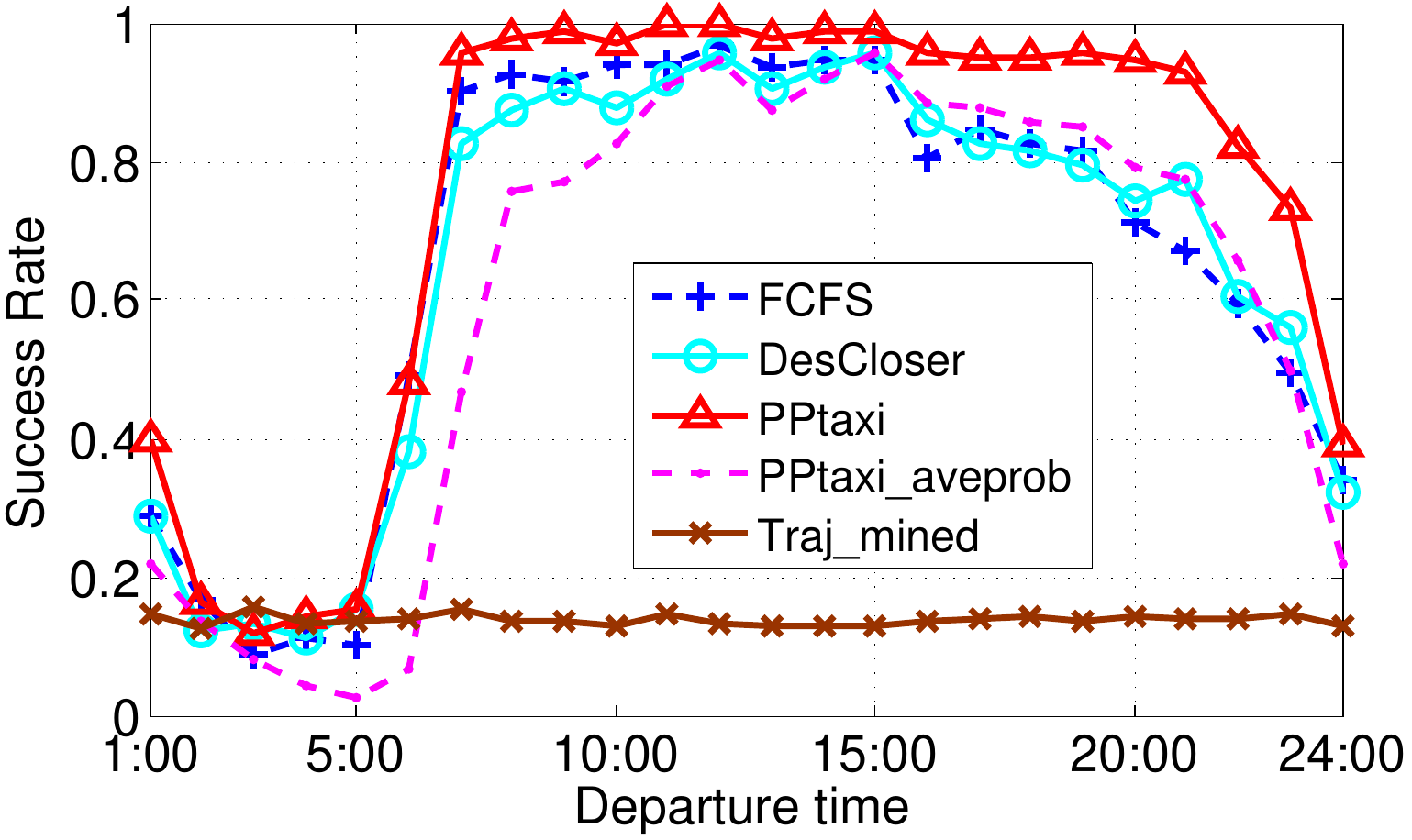}}
\caption{The Success Rate over departure time}
\label{fig_srt}
\end{figure*}

\textbf{Benchmarks}: To show the performance of our proposed framework, we compare it with the following benchmark algorithms. Note that, all the benchmarks are within the non-stop strategy, i.e., a taxi should continue to transport passengers until the package is delivered.

\begin{itemize}
\item \textbf{Traj\_mined}: We adopt the generated package requests on the trajectory dataset, and evaluate the package-transportation ability in today's taxi-booking platform. If the departure location and destination of a package request can match to one same driver's ID within the trajectory dataset, we think the package can be delivered successfully.
\item \textbf{FCFS}: This algorithm adopts the First Come First solution strategy. The taxi, which is carrying a package, always choose the first appeared passenger order in each step. This algorithm is actually a random walk strategy.
\item \textbf{DesCloser}: In each step of this algorithm, the taxi always select the passenger order who heads to somewhere closest to the package destination. This algorithm is actually a distance-based greedy strategy.
\item \textbf{PPtaxi\_aveprob}: In this algorithm, the taxi route planning algorithm is the same as \textbf{PPtaxi}, as the prediction-based sequential planning algorithm. Nevertheless, the passenger  order probability here is simply computed by the average value of historical data, i.e.,
\[P(Y \textrm{=} {b_j},X \textrm{=} {b_i}|T \textrm{=} {t_k}) \textrm{=} \frac{{freq(Y \textrm{=} {b_j},X \textrm{=} {b_i}|T \textrm{=} {t_k})}}{{\sum\limits_{i,j} {freq(Y \textrm{=} {b_j},X \textrm{=} {b_i}|T \textrm{=} {t_k})} }}\]
\end{itemize}

\textbf{Metric}: We adopt the \emph{Success Rate} (SR) as the metric to evaluate our algorithm and the benchmarks. The success rate is defined as the ratio of the number of packages successfully delivered within a given deadline to the total number of packages, i.e., 
\[Success\ Rate = \frac{{|u|u_{desT}\mathrm{-}u_{depT} \mathrm{\le} maxT|}}{{|U|}}\]
where the $u_{desT}$ and $u_{depT}$ denote the delivery time and departure time of package $u$, respectively; the $maxT$ is the threshold on the delivery time, the $U$ denotes the number of all packages.

\textbf{Environment}: All the evaluations in this paper are run in Matlab R2014a on an Intel Core i5-7400 PC with 16-GB RAM and Windows 7 operation system. 

\subsection{Performance Evaluation}

The following questions are of our interests:
\begin{itemize}
\item How well does the PPtaxi perform overall?
\item How well does the PPtaxi perform under different time constraints? 
\item How well does the PPtaxi perform with package number increase?
\item How many computational resources are required to generate the response for a package delivery request?
\item How reasonable is our problem formulation?
\end{itemize}

\textbf{(1) How well does the PPtaxi perform overall?}

In this experiment, we choose $100$ $(u_{dep}, u_{des})$ pairs and $u_{depT}$ at each hour i.e., in total $100\mathrm{\times}24\textrm{=}2,4000$ packages. The package departure locations are generated according to the density ratio in different blocks; the package destinations are generated randomly. We evaluate our algorithm and benchmarks on those $2,4000$ instances and compute the average performance. Fig. \ref{fig_srt} shows the Success Rate under the different departure time of the packages.

First, we can find that the SR of \textbf{PPtaxi} can be at highest $100\%$ when the $maxT$ is $10$ hours, which denotes that all of the packages can be delivered successfully through our non-stop package delivery solution. Compare with the benchmarks, the SR of \textbf{PPtaxi} is always higher than the three benchmark algorithms no matter what the departure time is, which proves the efficiency of our framework. The SR advantage of \textbf{PPtaxi} can reach $46.9\%$ at most. The SR advantage of the \textbf{PPtaxi} is higher than that of \textbf{PPtaxi\_aveprob} proves the efficiency of our passenger prediction method. The SR results of \textbf{FCFS} fluctuate obviously, while our non-stop framework is more stable with different departure time. Moreover, the performances of \textbf{DesCloser} and \textbf{FCFS} are very similar. The reason is that we always select the passenger order heading to somewhere closest to the package destination among all the orders in each step, instead of selecting a closer (than the current location) one, since maybe no orders meet this requirement. 

Second, the SR of \textbf{Traj\_mined} (which is around $19\%$) is far below that of \textbf{PPtaxi}. This phenomenon is determined by the feature of the dataset. Due to the specialty of the DiDi platform, where most of the taxis are private cars, the average working-time length of each driver is limited, around $25.15$ slots according to our statistics in Tab. \ref{tab_dataset}. It is far less than that of professional taxis. Moreover, today's passenger assignment is aiming to maximize the income of the platform, without considering the package delivery. Therefore, the success rate mined through the trajectory dataset is very low. It states that the passenger assignment in our framework can improve the package delivery apparently.

Third, the SR of all four algorithms is lower than $50\%$ when the departure time is within $0\mathrm{:}00$ to $6\mathrm{:}00$, and then increases apparently and becomes higher than $60\%$ while the departure time is within $7\mathrm{:}00$ to $23\mathrm{:}00$. The reason can be seen in the Fig. \ref{fig_aven}, which illustrates the average order number in each slot (among all the blocks). We can see that the former period is the non-rush hours, the passenger number is insufficient; the latter period is the rush hours, and there are enough passenger orders to provide the ridesharing chances to deliver a package. When the $maxT$ is $3$ hours, the SR with our solution can reach $95\%$ on average during the daytime. 

\textbf{(2) How well does the PPtaxi perform under different time constraints? }

Fig. \ref{fig_srmaxt} illustrates the SR of four algorithms under different settings of the $maxT$, the threshold of delivery time. In this experiment, we choose $100$ ($u_{dep}, u_{des}$) pairs of packages under each departure time setting. In Fig. \ref{fig_srmaxt8c}, the departure time of each package (denoted as $u_{depT}$) is $8\mathrm{:}00$; in Fig. \ref{fig_srmaxt15c}, the departure time is set as $15\mathrm{:}00$. 

\begin{figure}
\centering
\subfigure[$u_{depT}\mathrm{=}8\mathrm{:}00$]{
\label{fig_srmaxt8c}
\includegraphics[width=0.32\textwidth]{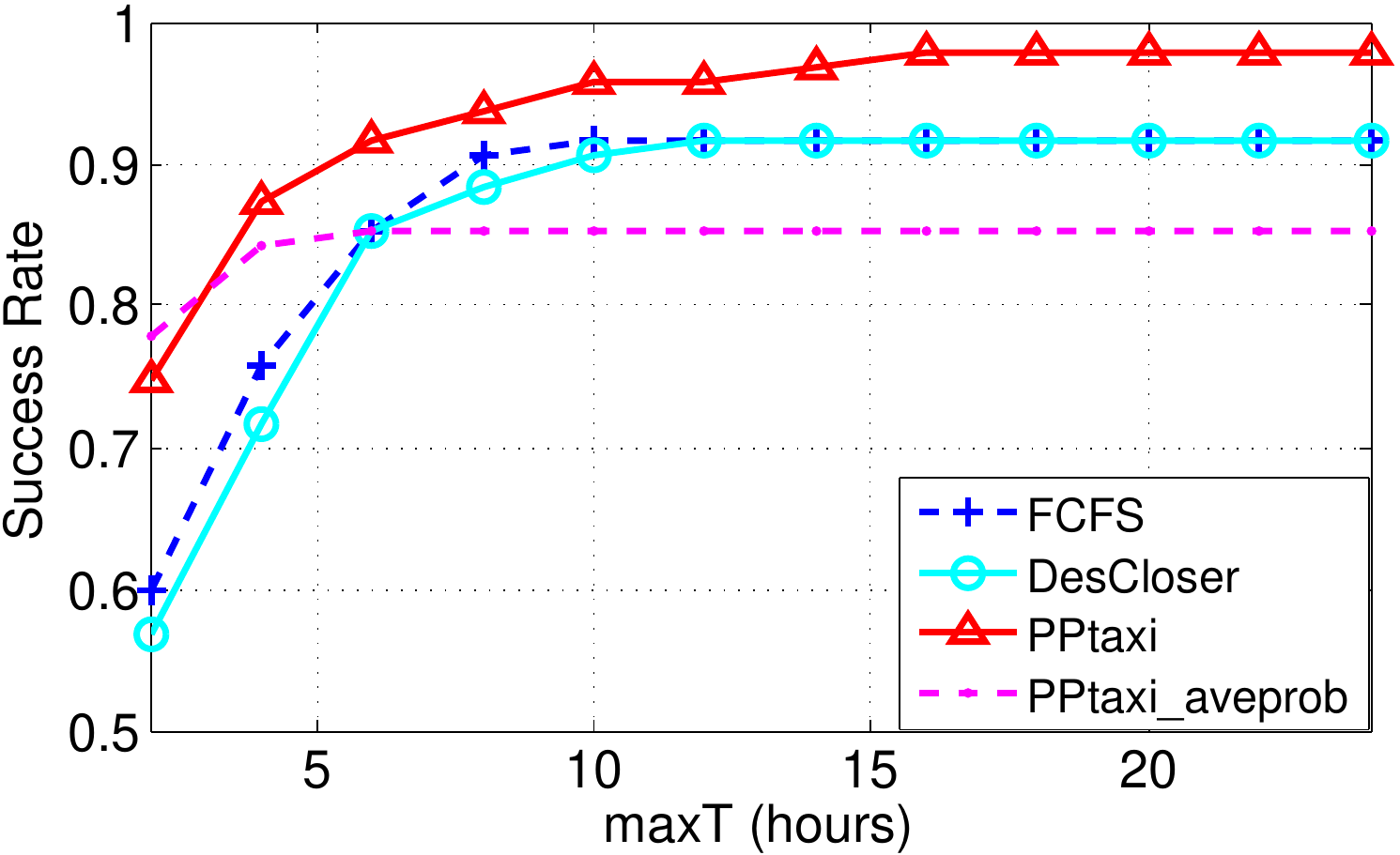}}
\subfigure[$u_{depT}\mathrm{=}15\mathrm{:}00$]{
\label{fig_srmaxt15c}
\includegraphics[width=0.32\textwidth]{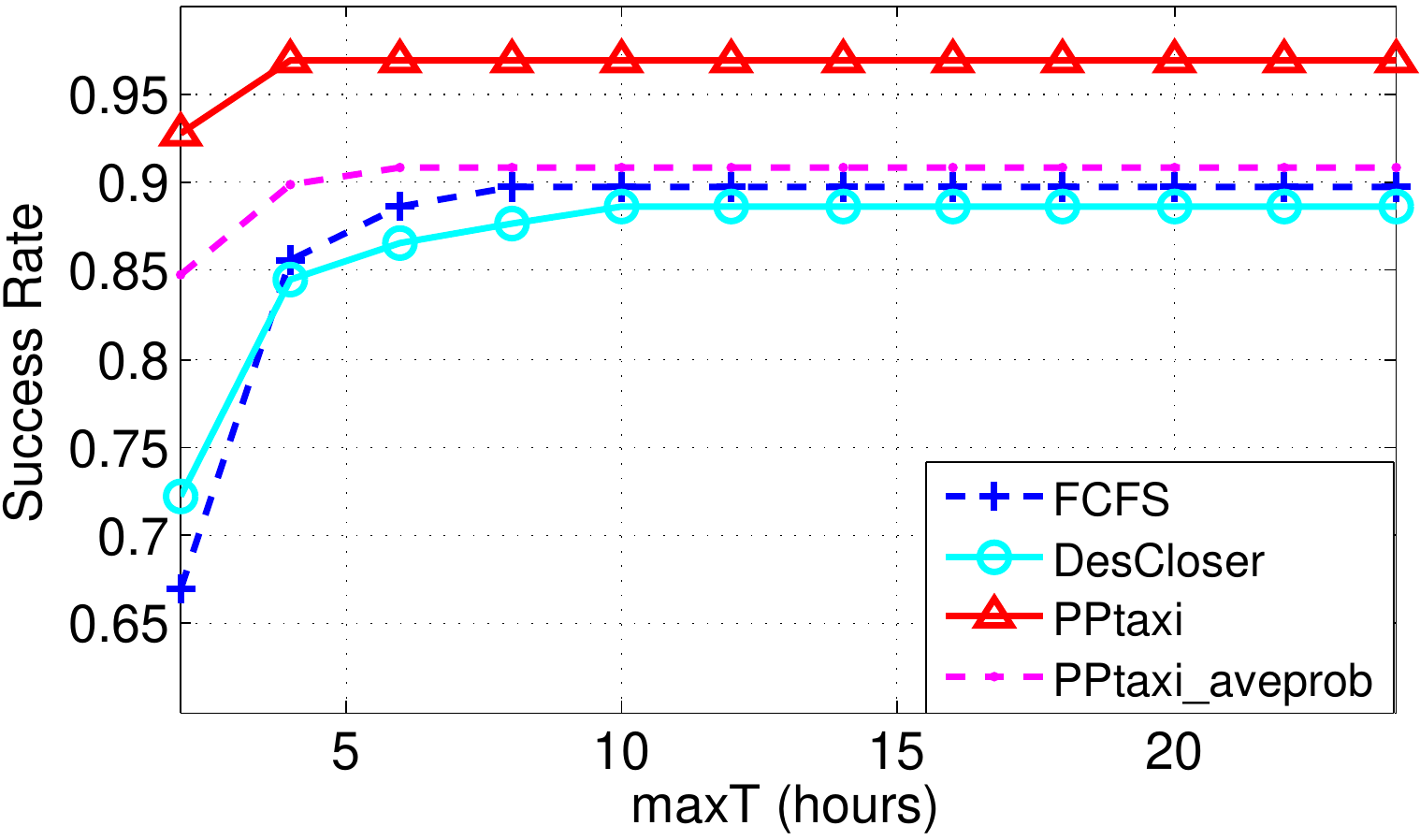}}
\caption{The Success Rate over time constraints}
\label{fig_srmaxt}
\end{figure}

First, we can see that the SR of \textbf{PPtaxi} increases with the growth of $maxT$. The reason is that longer $maxT$ means the more potential delivery routes. The SR increase various under different departure time, as the passenger distributions are different from each other and our prediction shows a various level of accuracy. When the departure time is $15\mathrm{:}00$, the SR increase of \textbf{PPtaxi} is limited as the $maxT$ increases. The reason is that, after $5$ hours, the order number shows a noticeable decline, then after $8$ hours, the city enters the non-rush hours. Thus, the $maxT$ setting larger than $8$ hours brings little chance. Specifically, the highest SR of \textbf{PPtaxi} is always lower than $1$. The reason is that we generate the package destinations randomly without considering the area functions. Thus destinations without residents may incur impossible delivery since no orders appear heading to these destinations.

Second, compared with the other three algorithms, the SR of \textbf{PPtaxi} is always the highest, no matter what the $maxT$ is. The results of \textbf{PPtaxi\_aveprob} algorithm show unstable performance. Sometimes it is higher than the SR of \textbf{FCFS}, while at other times it is lower than the latter one. That phenomenon denotes that the prediction through historical average values is not accurate enough, as the cumulative error of that algorithm induces the inaccuracy of multi-hop-passenger prediction. Furthermore, we are surprised to find that the \textbf{FCFS} algorithm performs very well (the SR is $90\%$) when the $maxT$ is long enough, i.e., more than $8$ hours. However, we have to emphasize that our non-stop framework request the taxi to keep on transporting passengers until the package is delivered successfully, so it is more suitable for the short-term delivery. In short-term route planning, the advantage of \textbf{PPtaxi} is more evident than the other algorithms.

\textbf{(3) How well does the PPtaxi perform with package number increase?}

\begin{figure}
\centering
\subfigure[$maxT\textrm{=}3\ hours$]{
\label{fig_srp3h}
\includegraphics[width=0.32\textwidth]{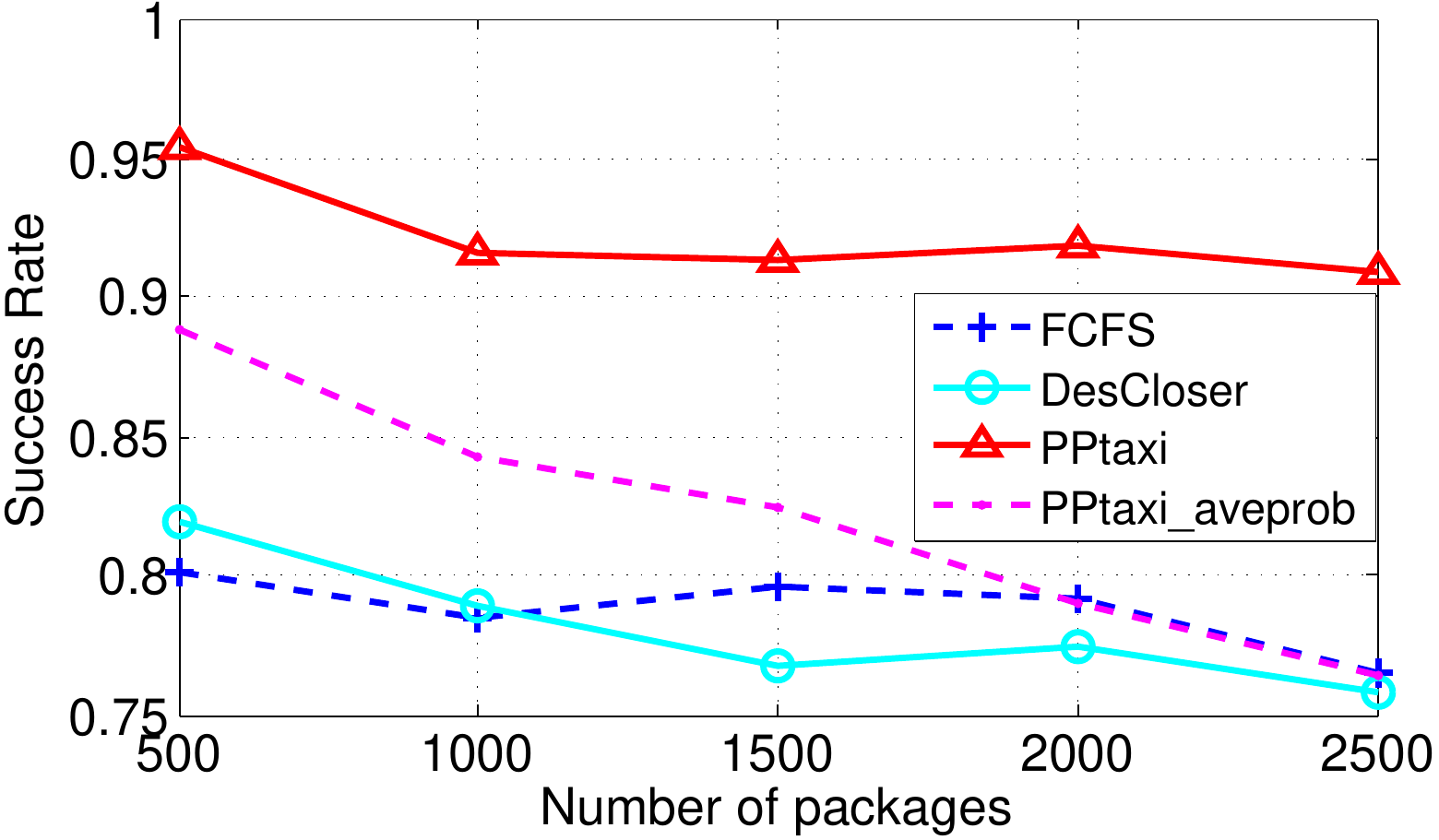}}
\subfigure[$maxT\textrm{=}10\ hours$]{
\label{fig_srp10h}
\includegraphics[width=0.32\textwidth]{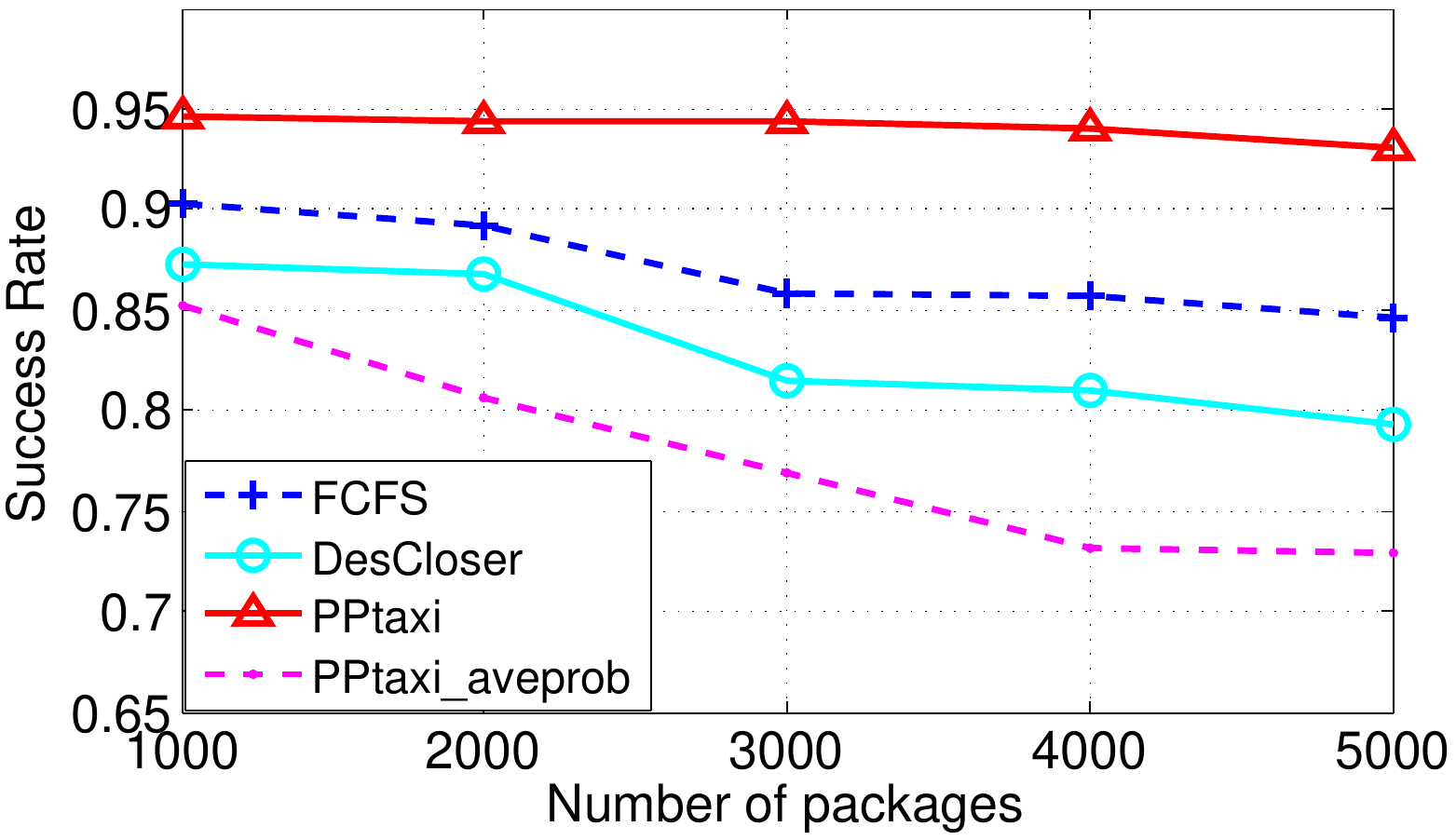}}
\caption{The Success Rate over package number. ($u_{depT}\textrm{=}15\mathrm{:}00$)}
\label{fig_srp}
\end{figure}

Fig. \ref{fig_srp} shows the performance of the four algorithms under different package numbers. In this experiment, the departure time of all packages is set as $15\mathrm{:}00$. In Fig. \ref{fig_srp3h} and Fig. \ref{fig_srp10h}, we set the delivery time constraint as $3$ hours and $10$ hours, respectively. 

First, we can see that, when the time constraint is $3$ hours, the SR of \textbf{PPtaxi} still can reach $90\%$ with $2500$ packages generated within one hour, while the SR of the other three algorithms is lower than $80\%$. As discussed in Fig. \ref{fig_srmaxt}, with the $maxT$ increases, the SR rises, and more packages can be delivered successfully. When the time constraint is $10$ hours, $93\%$ of the $5000$ packages can be delivered successfully through our \textbf{PPtaxi} framework. Meanwhile, the SR of the other three algorithms is at most $85\%$.

Second, the SR of all four algorithms decreases with the growth of the package number. The reason is that multiple package delivery simultaneously may reduce resource congestion, since different delivery routes may need the same passenger order. In this paper, we propose to optimize the route planning of each package one by one instead of giving a globally optimal solution for all packages. Thus the resource congestion phenomenon is not in-depth discussed; this part will be the primary focus of our future work. Nevertheless, the experiment results show that the decrease of our \textbf{PPtaxi} with the package number growth is far less than those of the other three algorithms, which indicate the superiority of our solution.

\textbf{(4) How many computational resource is required to generate the response for a package delivery request?}

\begin{table}[htb]
\caption{The cost and performance comparison of the proposed three route planning algorithms}
\label{tab_timecost}
\centering
\begin{tabular}{|c|c|c|c|}
\hline
 &  OPT & PSP & HSP  \\
 \hline
 average time cost & 22.5s & 952.8s& 0.007s \\
 \hline
 Success Rate & 34\% & 94\% & 92\% \\
\hline
\end{tabular}
\end{table}

In this subsection, we evaluate the time cost of our proposed three algorithms, including OPT, PSP, and HSP. Since the OPT algorithm does not consider the mismatching problem, the SR result is low in the real-life scenarios. The PSP algorithm calls the OPT algorithm many times while the latter one is a pseudo-polynomial-time algorithm, thus the PSP algorithm is not efficient enough to be used in the real world. The HSP algorithm is a heuristic algorithm that tradeoff between the computation efficiency and solution effectiveness. In this experiment, we set the departure time as $8\mathrm{:}00$, the delivery time constraint as $3$ hours, and evaluate the average performance of $50$ packages. The results about the time cost and performance comparison of the proposed three rout planning algorithms can be seen in Table \ref{tab_timecost}. 

We can see that the SR of the OPT algorithm is $34\%$, which is very low due to the prediction inaccurate. The average time cost of one package delivery request by utilizing the OPT algorithm is around $22.5s$. The SR of the PSP algorithm is the highest ($94\%$) since the mismatching problem is tackled and the prediction based optimal route is adopted. That result also indicates that even though the passenger prediction is not accurate enough, our route planning algorithm can make the success rate of package delivery reaches a satisfactory value. However, the average time cost of one package delivery request by utilizing the PSP algorithm is as high as $15.88$ minutes, which is intolerable in the real world due to the rapid dynamic of passenger distributions. The SR of the HSP algorithm is less than that of the PSP algorithm, but it still can reach $92\%$. Moreover, the average time cost of one package delivery request by utilizing the HSP algorithm is only $0.007s$. Therefore, the HSP algorithm improves the computation efficiency significantly by sacrificing a small portion of the performance, which makes itself amenable for practical implementation.

\textbf{(5) How reasonable is our problem formulation?}

We formulate the non-stop problem of the route probability maximization and translate the problem into the shortest route problem. Fig. \ref{fig_ap} shows the average $-log(P(r))$ (denoted as AP) of the delivery routes resulted by different algorithms under different parameter settings. $P(r)$ denote the probability of the route $r$, i.e., the probability product of the passenger orders along the route. As the probability calculation method of \textbf{PPtaxi} and \textbf{PPtaxi\_aveprob} are different, we did not list the result of the latter algorithm. 

Fig. \ref{fig_apt} illustrates the AP result of the three algorithms under different departure time; the time constraint is set as $10$ hours, which corresponds to the results in Fig. \ref{fig_srt10h}. We can see that when the departure time is within $7\mathrm{:}00$ to $23\mathrm{:}00$, the AP of \textbf{PPtaxi} is always lower than that of \textbf{FCFS} or \textbf{DesCloser}. More specifically, combined with the results of Fig. \ref{fig_srt}, within $7\mathrm{:}00$ to $20\mathrm{:}00$, the AP of the three algorithms decreases and the SR increases; within $20\mathrm{:}00$ to $23\mathrm{:}00$, the AP increases and the SR decreases. This phenomenon can prove that our formulation of maximizing the probability cumulation of each hop within a route is reasonable. However, when the departure time is within $0\mathrm{:}00$ and $6\mathrm{:}00$, the AP results of the three algorithms are almost the same, and it did not show the opposite tendency between SR and AP. The reason is that the passenger number in this period is limited and the regularity is elusive, which leads to the inaccuracy of the passenger prediction. As most packages are sent in the daytime in the real-world, our \textbf{PPtaxi} is amenable for practical implementation. Furthermore, we believe that by combing more data sources, the prediction can be more accurate and the performance of \textbf{PPtaxi} can be further improved.

Similarly, Fig. \ref{fig_apmaxt} and Fig. \ref{fig_app} show the AP result of the three algorithms under different $maxT$ and package numbers, respectively. We can see that the AP of \textbf{PPtaxi} is always lower than that of \textbf{FCFS} or \textbf{DesCloser}, which meets our expectations. The SR of each algorithm increases with the growth of $maxT$ since more hops are needed to deliver more packages successfully. The SR of \textbf{FCFS} and \textbf{DesCloser} increases dramatically, because there is no route optimization in those two algorithms.

\textbf{(6) The demonstration of the order dataset.}

\begin{figure*}[!t]
\centering
\subfigure[$maxT\textrm{=}10\ hours$]{
\label{fig_apt}
\includegraphics[width=0.32\textwidth,height=1.4in]{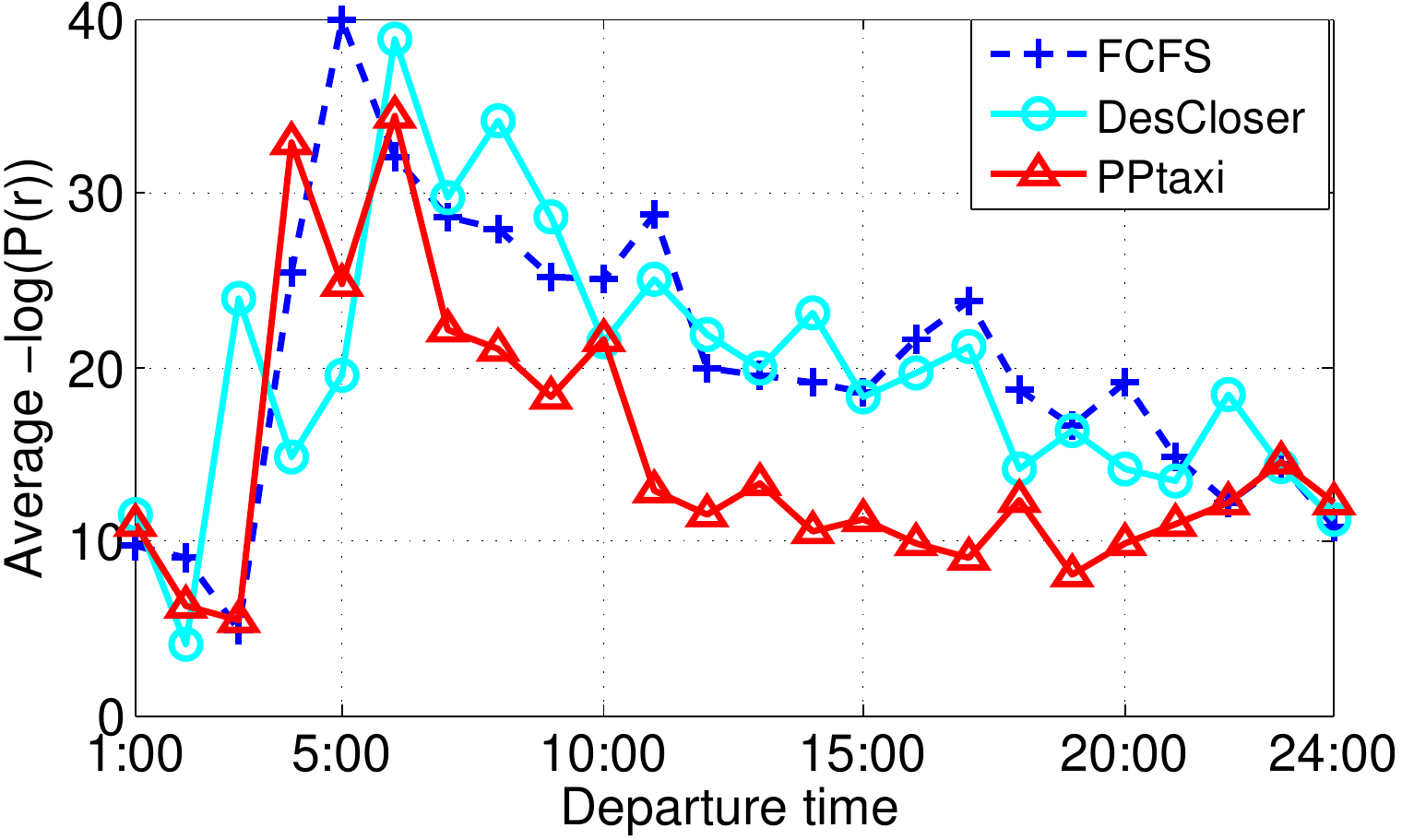}}
\subfigure[$u_{depT}\mathrm{=}15\mathrm{:}00$]{
\label{fig_apmaxt}
\includegraphics[width=0.32\textwidth,height=1.4in]{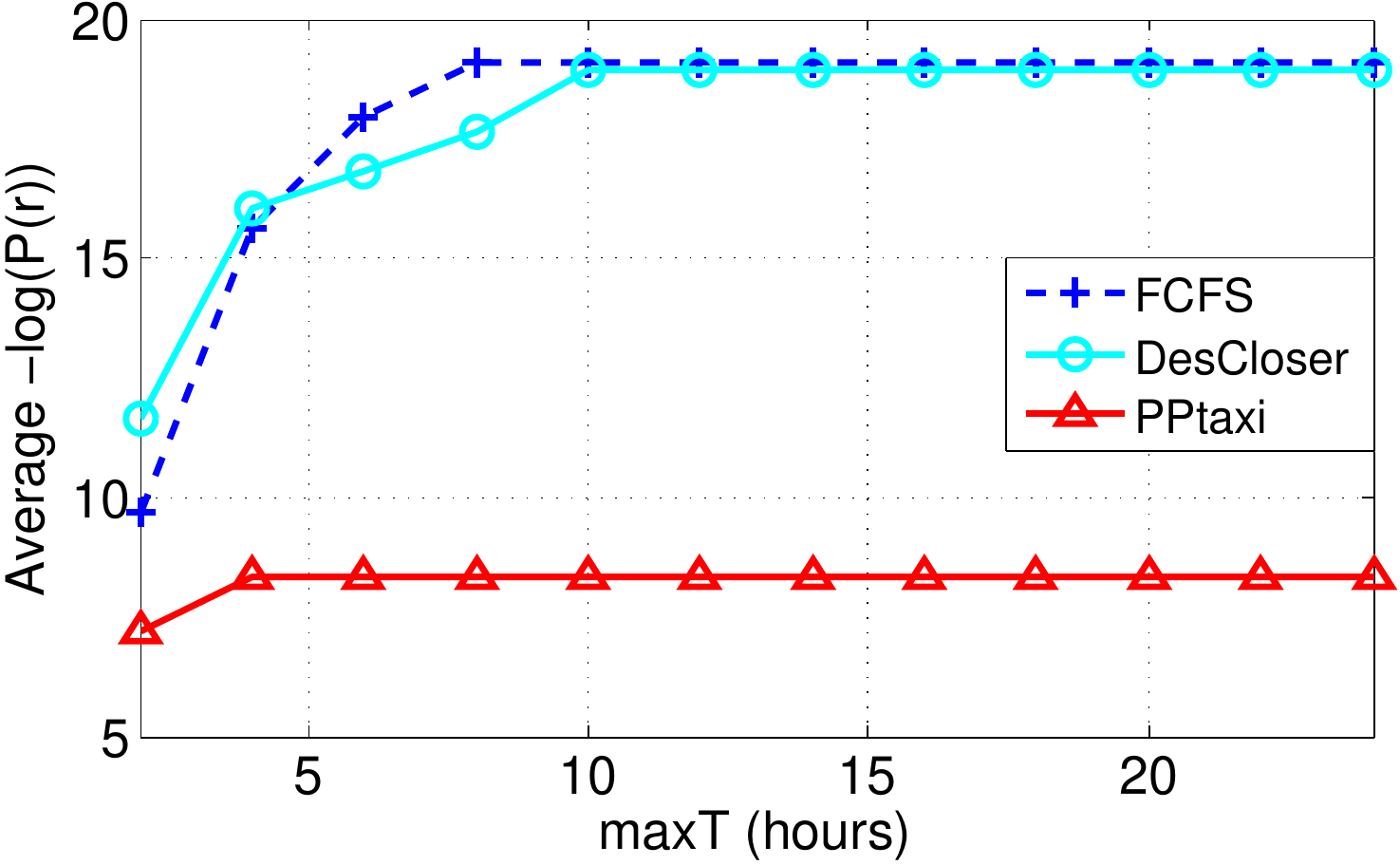}}
\subfigure[$maxT\textrm{=}3\ hours, u_{depT}\mathrm{=}15\mathrm{:}00$]{
\label{fig_app}
\includegraphics[width=0.32\textwidth,height=1.4in]{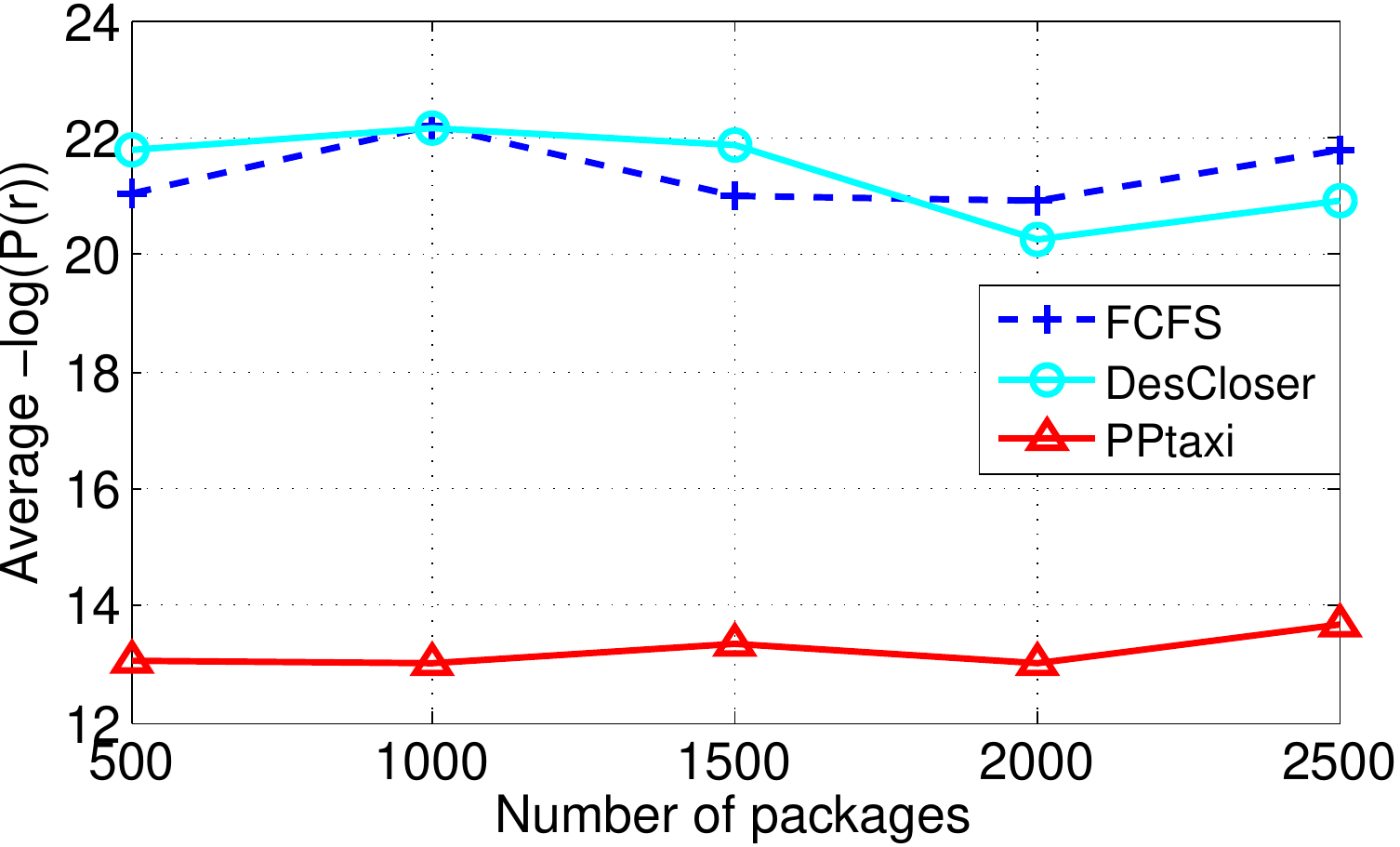}}
\caption{The Average $-log(P(r))$ results over different parameter settings.}
\label{fig_ap}
\end{figure*}
\begin{figure*}
\begin{minipage}[t]{0.33\textwidth}
\centering
\includegraphics[width=\textwidth,height=1.4in]{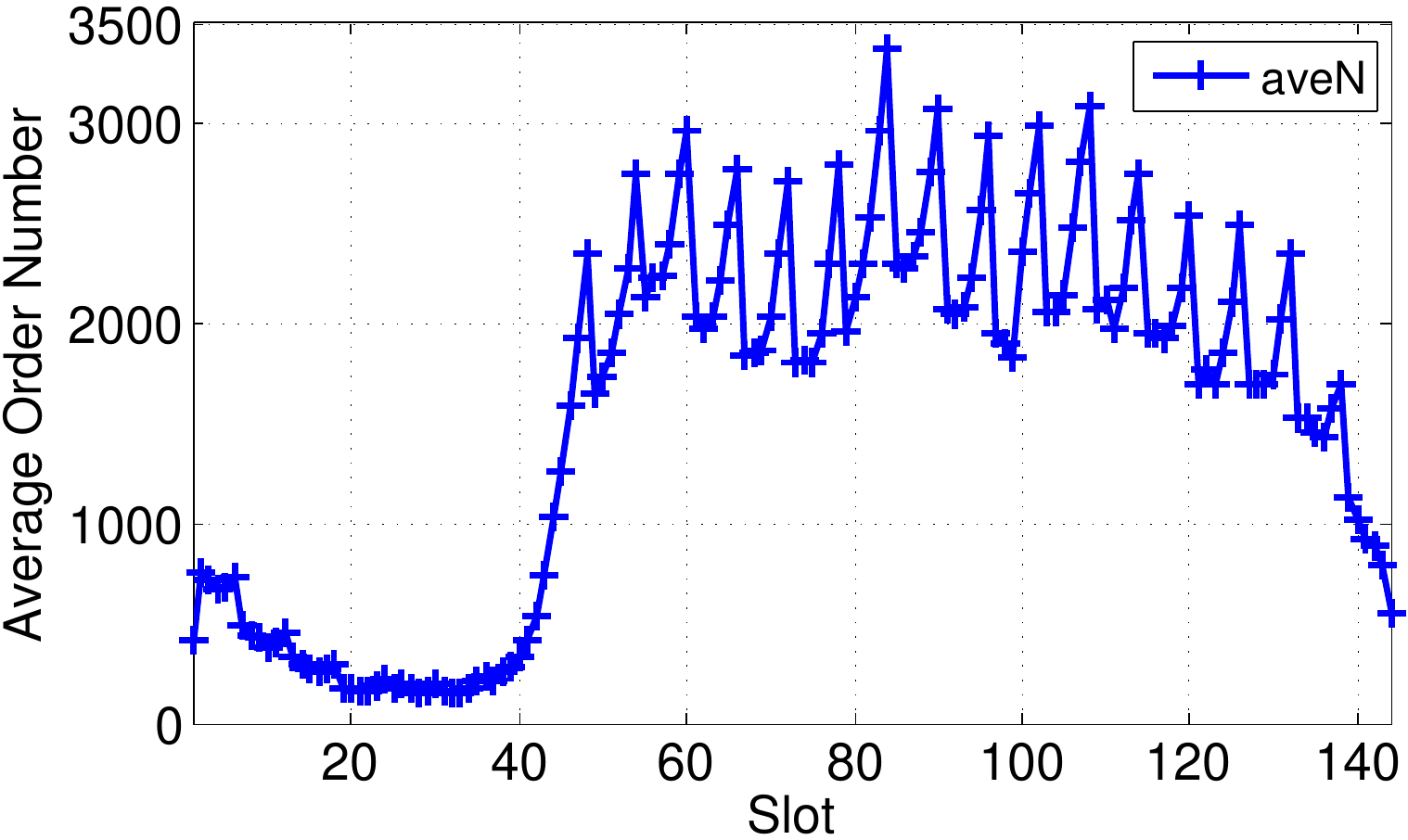}
\caption{The average order number in each slot.}
\label{fig_aven}
\end{minipage}
\begin{minipage}[t]{0.33\textwidth}
\centering
\includegraphics[width=\textwidth,height=1.4in]{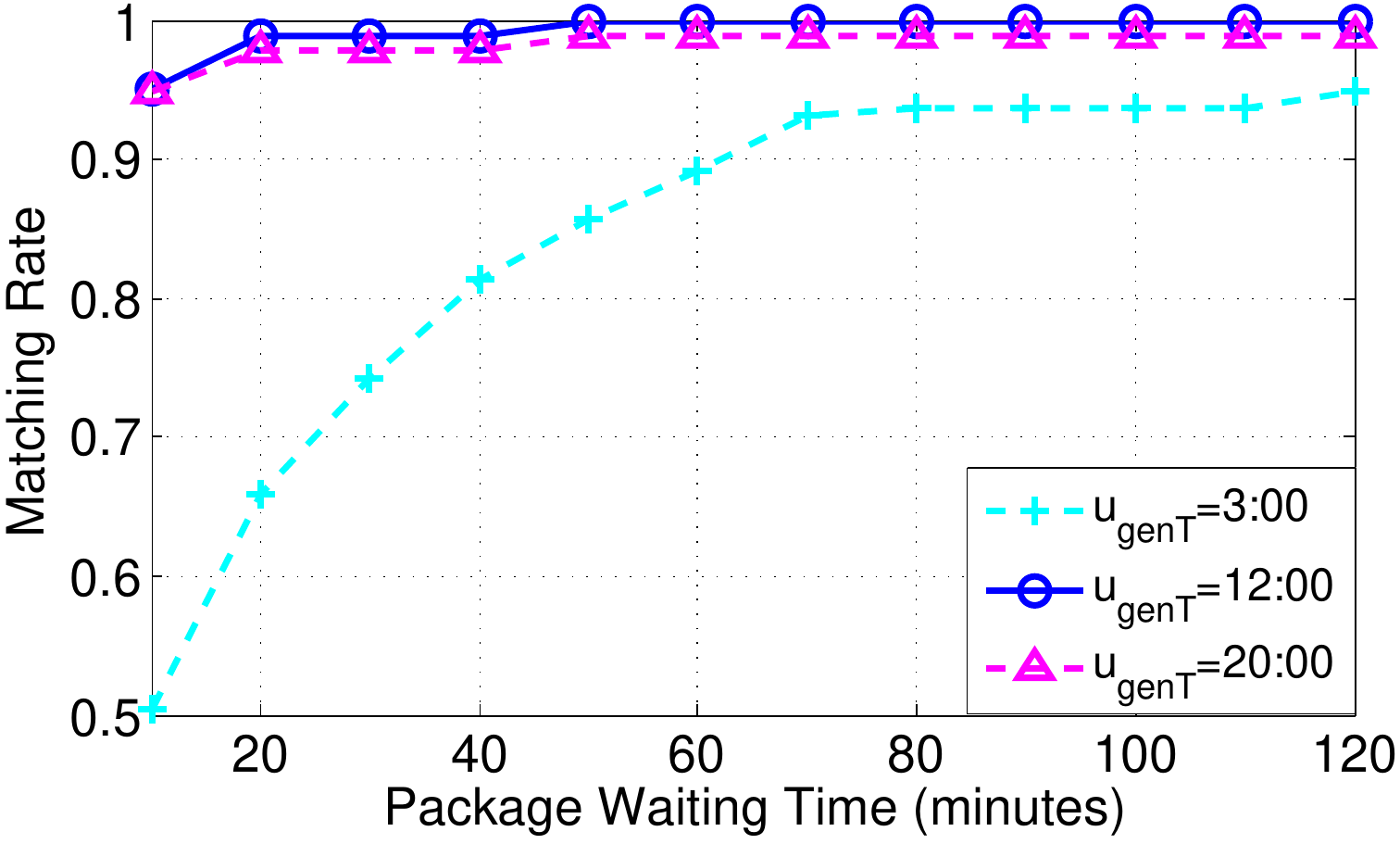}
\caption{The package-taxi matching.}
\label{fig_ptnum}
\end{minipage}
\begin{minipage}[t]{0.33\textwidth}
\centering
\includegraphics[width=\textwidth,height=1.4in]{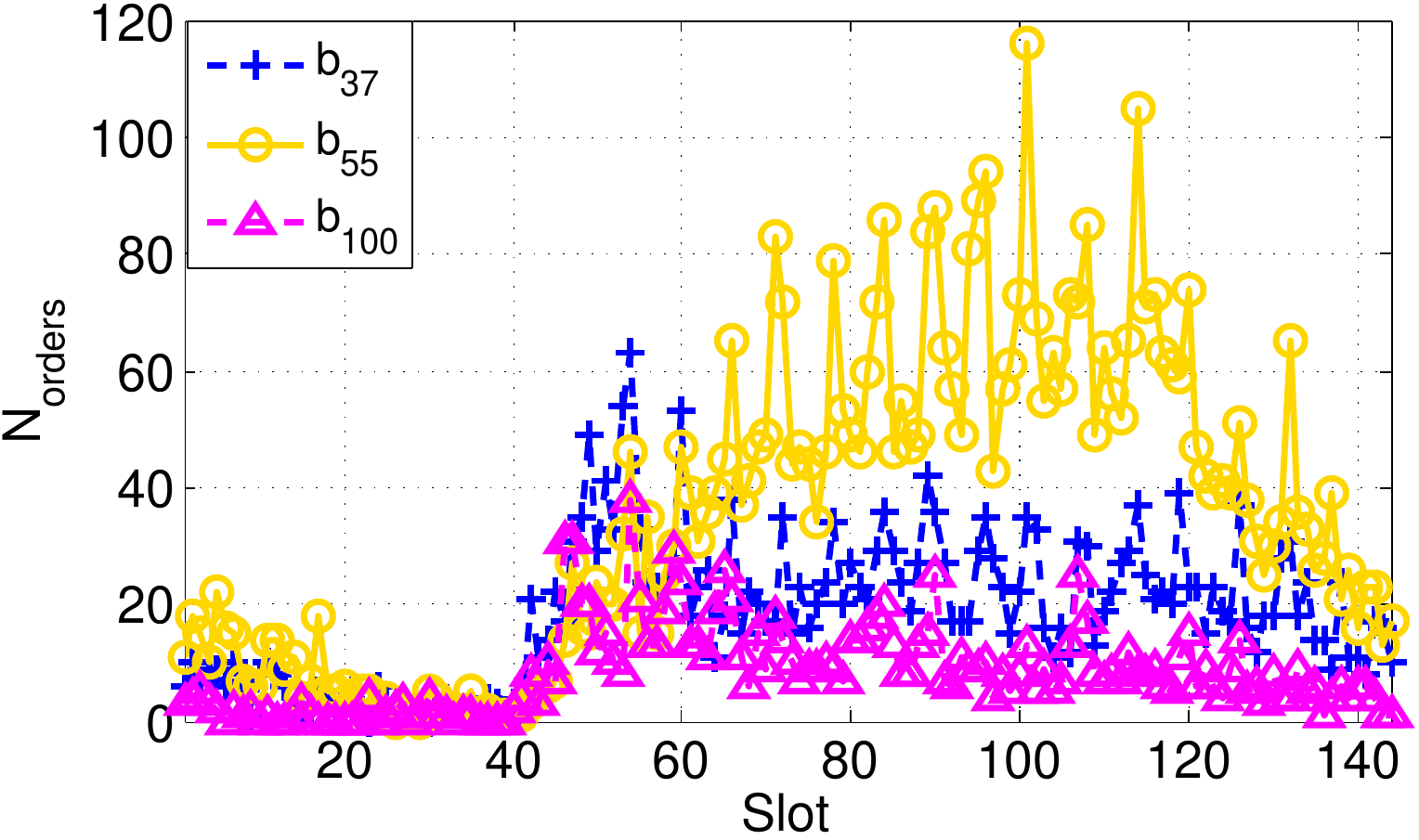}
\caption{The order number of one block in each slot.}
\label{fig_norder}
\end{minipage}
\end{figure*}

Fig. \ref{fig_ptnum} shows the package matching rate (MR) under different waiting time. In the first step of each package, the platform should dispatch a taxi to fetch the package. In this paper, the nearest taxi within the same block and slot of the package defaults to be dispatched. If a package is fetched by a taxi, we denote the package as matched. MR is the ratio between the number of matched packages and the number of all packages. The route planning in our proposed algorithm is started at the departure time of each package. Between the generation time $u_{genT}$ and the departure time $u_{depT}$ of a package, there is a period of waiting time. In this experiment, we show the relationship between the MR and the waiting time of packages. We can see that, when the generation time is $3\mathrm{:}00$, the waiting time for each package can be up to one hour to reach $90\%$ MR. When the generation time is $12\mathrm{:}00$ or $20\mathrm{:}00$, $10$ minutes is enough. It means there are enough taxis for package delivery in the daytime, which further implies that the MR is directly impacted by the taxi density in the target area.

Fig. \ref{fig_norder} illustrates the candidate order number ($N_{orders}$) in each departure slot within a block. $b_{i}$ denotes the $i^{th}$ block. In this experiment, we select different blocks to evaluate, including the center ($b_{55}$), the border ($b_{100}$), and the randomly selected block ($b_{37}$) of the target area. We can see that the candidate orders are the most in the area center and least at the border. The average order number of rush hours in these three blocks are around $10$, $23$, and $60$. This observation result explains the time-cost phenomenon of the PSP algorithm.

\section{Related Work}
\label{sec_relat}

\subsection{Crowdsourced package delivery}

City logistics has recently appropriated an exponential growth of publications \cite{dolati2018systematic,cleophas2018collaborative}. Many novel ideas have been proposed to speed up the city-wide package delivery, such as multi-model express \cite{kim1999multimodal}, UAV-based (Unmanned Aerial Vehicle) delivery \cite{agha2014health}, and crowdsourced delivery, which is the focus of our paper. In the crowdsourced delivery, the packages can be delivered with common people instead of professionals, resulting in economic, social and environmental benefits. There are two kinds of directions in the crowdsourced delivery, dedicated delivery, and ridesharing based delivery. 

Dedicated delivery indicates that the participants are recruited to deliver packages deliberately. There are no concurrent tasks during a delivery. For example, Sadilek et al. \cite{sadilek2013crowdphysics} recruited a group of Twitter users, asking one person to pass the assigned package to another Twitter user that happened to be nearby. Within this direction, the matching between participants and packages are similar to the task allocation problem, where multiple research papers have been developed \cite{chen2018survey,gong2018task,guo2018task}. 

Ridesharing based delivery means that the delivery operations are carried out by using excess capacity on journeys that are already taking place. The ridesharing based passenger transporting is also a popular development direction for the online taxi-taking platforms, such as DiDi \cite{didi} and Uber \cite{uber}. However, most of the ridesharing systems are based on isomorphic tasks, either passenger \cite{ma2013t} or packages \cite{rouges2014crowdsourcing,setzke2017matching}, while few papers focus on the isomerous ridesharing (the package-passenger ridesharing). Walmart proposes to make use of its in-store customers to deliver goods to its online customers on their way home from the store \cite{Arslan2016}. Liu et al. \cite{Liu2018} propose to investigates the participant of urban taxis to support on demand take-out food delivery.

The most related work with this paper is \cite{Chen2017} and \cite{wangridesharing}. Chen et al. \cite{Chen2017} exploit relays of taxis with passengers to help transport package collectively, without degrading the quality of passenger solutions. Consignment warehouses are deployed along the roads as relay points, and one package could be delivered under the cooperation of several taxis. This method reduces the delivery cost while increasing the storage cost, as many consignment warehouses are needed to cover the whole city. In our paper, no consignment warehouses are needed, and the package delivery is non-stop. Wang et al. \cite{wangridesharing} propose a ridesharing based package delivery solution. When the pickup location of the package is close to the vehicle's source, and the drop off location is close to the vehicle's destination, the vehicle is assumed to deliver the package in a ridesharing way. An online algorithm is designed to match the package delivery tasks to the submitted vehicles, aiming to maximize the utility of the ridesharing solution provider. In this paper, only one hop of ridesharing is considered, while in our paper, we discuss the ridesharing routs consisting of multiple hops of passengers to successfully deliver a package.

\subsection{Passenger order prediction and dispatch}

Passenger order prediction is essential for the taxi deployment and scheduling to enabling intelligent transportation systems in a smart city. Two parameters are essential within the order prediction, i.e., order number and passenger flow. Order number indicates the number of the taxi-taking orders submitted per unit time and per unit block. To improve the prediction accuracy, many recent works discuss and combine the multi-source data fusion and deep learning technologies \cite{tong2017simpler,yao2018deep}. The crow flow prediction is deeply studied in other areas, for example, Zhang et al. \cite{zhang2017deep} propose a deep-learning approach to collectively forecast the indemand and outdemand of crowds in each block of a city. However, this method can not be used in the passenger flow prediction directly. Passenger flow indicates to predict the direction of passengers appeared in each block within each time slot. Passenger flow prediction is important for optimizing the order dispatch strategies in the online taxi-taking applications. Zhang et al. \cite{Zhang2017taxi} propose to predict destinations of a user once the taxi-booking application is started by employing the Bayesian framework to model the distribution of a user's destination based on his/her travel histories. However, only the personal destination is predicted in Zhang's paper, and the general flow prediction is not discussed. Note that, our delivery solution is orthogonal with all existing passenger flow prediction methods. The combination of our route planning algorithm and more accurate flow prediction can further improve the success rate of package delivery.

Passenger order dispatch is another important research area for the online taxi-taking platform. Tong et al. \cite{tong2016,tong2017flexible} discussed the matching problem between passengers and taxis in perspectives. Zhang et al. \cite{Zhang2017taxi} propose a novel system to optimally dispatch taxis to serve multiple bookings, aiming to maximize the global success rate. Xu et al. \cite{xu2018large} designs an algorithm to provide a more efficient way to optimize the resource utilization and user experience in a global and more farsighted view, instead of focusing on immediate customer satisfaction. The goal of all these papers is to optimize the platform's efficiency to serve passengers more satisfactorily. In our paper, the order dispatch algorithm is altered to serve the package delivery and passenger transporting synchronously. Thus, it may sacrifices the platform efficiency and user experience to some degree compared with the traditional dispatch methods. For example, a taxi may need to go further to take an optimal passenger order along with our planned route, compare with the traditional order dispatch algorithms. However, We believe that the reward brings by the package delivery can make up the induced problem. There is a lot of details within this scenario can be discussed.

\section{Conclusions and Future Works}
\label{sec_conclusion}

In this paper, we propose a non-stop package delivery solution. One taxi is responsible for one package delivery, while the passenger-transporting process of this taxi is not interrupted. We formulate the non-stop package delivery problem (NPD) of route-probability maximization, under the constraint of package delivery delay. The route probability is calculated according to the predicted passenger flow probabilities. We tackle the NPD problem with a two-phase solution, named \textbf{PPtaxi}. In the passenger order prediction phase, we mine the historical order data to calculate the probability of each passenger flow. In the package route planning phase, both the computation efficiency and solution effectiveness are considered. At last, we evaluate our solution using a real-world dataset and compare it with multiple benchmarks. 

Although the evaluation results show the effectiveness and efficiency of our solution, it can be further extended from the following aspects to make it more practical to use. First, more precise prediction methods could be considered when mining the passenger regulars, as we take the passenger prediction into account in our solution. The machine learning methods combined with multiple dimensional data sources (such as weather information and traffic information) may improve the prediction precision. Second, as the packages appear dynamically, we plan a delivery route for each package when it appears in this paper. If many packages appear together, our route planning strategy may induce resource congestion between packages. For example, the planned routes for two different packages (different at pick up location or drop off location) may need the same hop of passenger order. Hence, a congestion-aware route planning strategy may further improve the success rate of our package delivery solution. Third, the package delivery process can be considered jointly with the passenger delivery process in the real world to met the requirements of both packages and passengers. The co-dispatch of the passengers and packages within the online taxi-taking platform is an interesting research point. We leave these topics for our future work. 

\bibliographystyle{IEEEtran}
\bibliography{yueyue}

%




\end{document}